\documentclass[conference]{IEEEtran}

\makeatother
\pagestyle{headings}
% adjust as needed
\addtolength{\footskip}{0\baselineskip}
%\addtolength{\textheight}{-1\baselineskip}
%\addtolength{\textheight}{-5\baselineskip}

%\documentclass[10pt,conference]{IEEEtran}
%\documentclass[12pt]{article}
\usepackage{amssymb,amsmath}
\usepackage{bbding}
\usepackage{flushend,cuted}
\usepackage{amsthm}
\usepackage{amsfonts}
\usepackage{float} % is used for \begin{figure}{H}
\usepackage{srcltx}
\usepackage{mathrsfs} % is required for \mathscr alwhichphabet.
\usepackage{multirow}
\usepackage{amssymb}

\usepackage{graphicx}
\usepackage{epsfig}
\usepackage{psfrag}
\usepackage{subfigure}

\usepackage{setspace}
\usepackage{multicol}
\usepackage[noadjust]{cite}
\usepackage{hyperref}

\usepackage[]{units} % for nicefrac
\usepackage{url} % for URL bib
\usepackage[dvips]{color}
\usepackage{verbatim} % for comment environment
\usepackage{cite} % for [1]-[3]
\usepackage{ifthen} % for TpX
\usepackage{ifpdf} % for TpX
\usepackage{soul}
\usepackage{mathtools}

% -------------------------------------------------------------------
% Define the "Theorems" style
% -------------------------------------------------------------------

\newtheorem{lemma}{Lemma}

\newtheorem{theorem}{Theorem}
\newtheorem{definition}{Definition}

\makeatletter
\def\widebreve#1{\mathop{\vbox{\m@th\ialign{##\crcr\noalign{\kern3\p@}%
      \brevefill\crcr\noalign{\kern3\p@\nointerlineskip}%
      $\hfil\displaystyle{#1}\hfil$\crcr}}}\limits}

\def\brevefill{$\m@th \setbox\z@\hbox{$\braceld$}%
  \bracelu\leaders\vrule \@height\ht\z@ \@depth\z@\hfill\braceru$}
\renewcommand*\env@matrix[1][*\c@MaxMatrixCols c]{%
  \hskip -\arraycolsep
  \let\@ifnextchar\new@ifnextchar
  \array{#1}}
\makeatletter

%--------------------------------------------------------------------
% New Commands
%--------------------------------------------------------------------
% ver. 3.1 3-10-2012

 % OBSOLETE. Use \EE

%\newcommand{\indicator}[1]{\ensuremath{1_{\left{ #1 \right}}}}

% whitespaces

 % Random Variable
 % Random Variable

%%%%%%%%%%%%%%%%%%%%%%%%%%%%%%%%%%%%%%%%%%%%%%%%%%%%%%%%%%%%%%%%%%%%%%
%
% This file defines fonts for random variables.
%
%%%%%%%%%%%%%%%%%%%%%%%%%%%%%%%%%%%%%%%%%%%%%%%%%%%%%%%%%%%%%%%%%%%%%%

%% Definitions of random variable fonts.

\DeclareMathAlphabet{\mathbsf}{OT1}{cmss}{bx}{n}% bold sans serif
\DeclareMathAlphabet{\mathssf}{OT1}{cmss}{m}{sl}% slanted sans serif
\DeclareMathAlphabet{\mathcsf}{OT1}{cmss}{sbc}{n}% condensed sans serif

\newcommand{\svv}[1]{\mathbf{#1}}

% define some useful uppercase Greek letters in regular and bold sf
\DeclareSymbolFont{bsfletters}{OT1}{cmss}{bx}{n}  
\DeclareSymbolFont{ssfletters}{OT1}{cmss}{m}{n}
\DeclareMathSymbol{\bsfGamma}{0}{bsfletters}{'000}
\DeclareMathSymbol{\ssfGamma}{0}{ssfletters}{'000}
\DeclareMathSymbol{\bsfDelta}{0}{bsfletters}{'001}
\DeclareMathSymbol{\ssfDelta}{0}{ssfletters}{'001}
\DeclareMathSymbol{\bsfTheta}{0}{bsfletters}{'002}
\DeclareMathSymbol{\ssfTheta}{0}{ssfletters}{'002}
\DeclareMathSymbol{\bsfLambda}{0}{bsfletters}{'003}
\DeclareMathSymbol{\ssfLambda}{0}{ssfletters}{'003}
\DeclareMathSymbol{\bsfXi}{0}{bsfletters}{'004}
\DeclareMathSymbol{\ssfXi}{0}{ssfletters}{'004}
\DeclareMathSymbol{\bsfPi}{0}{bsfletters}{'005}
\DeclareMathSymbol{\ssfPi}{0}{ssfletters}{'005}
\DeclareMathSymbol{\bsfSigma}{0}{bsfletters}{'006}
\DeclareMathSymbol{\ssfSigma}{0}{ssfletters}{'006}
\DeclareMathSymbol{\bsfUpsilon}{0}{bsfletters}{'007}
\DeclareMathSymbol{\ssfUpsilon}{0}{ssfletters}{'007}
\DeclareMathSymbol{\bsfPhi}{0}{bsfletters}{'010}
\DeclareMathSymbol{\ssfPhi}{0}{ssfletters}{'010}
\DeclareMathSymbol{\bsfPsi}{0}{bsfletters}{'011}
\DeclareMathSymbol{\ssfPsi}{0}{ssfletters}{'011}
\DeclareMathSymbol{\bsfOmega}{0}{bsfletters}{'012}
\DeclareMathSymbol{\ssfOmega}{0}{ssfletters}{'012}

\newcommand{\SNR}{\text{$\mathsf{SNR}$}}
\DeclareRobustCommand{\prob}[1][{\rm Pr}]{\ensuremath {{#1}}}
\DeclareRobustCommand{\aNorm}[1][aNorm]{\ensuremath {\|{\bf a}\|}}
\DeclareRobustCommand{\Nt}[1][Nt]{\ensuremath {N_t}}
\DeclareRobustCommand{\alNt}[1][Nt]{\alpha(\Nt)}
\DeclareRobustCommand{\genGam}[1][\beta]{\ensuremath {{#1}}}

\usepackage{lipsum}

% ===================================================================
% Document Start
% ===================================================================

\usepackage{fancyhdr}
%\pagestyle{fancy}
%\fancyhead{} % clear all header fields
%\renewcommand{\headrulewidth}{0pt} % no line in header area
%\fancyfoot{} % clear all footer fields

\begin{document}
\allowdisplaybreaks

%\fancyfoot[LE,RO]{\thepage}           % page number in "outer" position of footer line
%\fancyfoot[RE,LO] % other info in "inner" position of footer line
\title{Explicit Lower Bounds on the Outage Probability of Integer Forcing over $N_r \times 2$ Channels}
%\title{Distributed Expurgation for Discrete Multiple-Access Channel via Linear Codes}

% author names and affiliations
% use a multiple column layout for up to three different
% affiliations
%\author{Elad Domanovitz and Uri Erez, \textit{Member, IEEE}

% \author{\IEEEauthorblockN{Michael Shell}
% \IEEEauthorblockA{School of Electrical and\\
% Computer Engineering\\
% Georgia Institute of Technology\\
% Atlanta, Georgia 30332--0250\\
% Email: mshell@ece.gatech.edu}
% \and
% \IEEEauthorblockN{Homer Simpson}
% \IEEEauthorblockA{Twentieth Century Fox\\
% Springfield, USA\\
% Email: homer@thesimpsons.com}
% \and
% \IEEEauthorblockN{James Kirk\\
% and Montgomery Scott}
% \IEEEauthorblockA{Starfleet Academy\\
% San Francisco, California 96678-2391\\
% Telephone: (800) 555--1212\\
% Fax: (888) 555--1212}}

\author{
\IEEEauthorblockN{Elad Domanovitz and Uri Erez}
\IEEEauthorblockA{%Dept. of EE-Systems,
%TAU\\
Dept. EE-Systems \\
Tel Aviv University, Israel
}
% \and
% \IEEEauthorblockN{Uri Erez}
% \IEEEauthorblockA{%Dept. of EE-Systems,
% %TAU\\
% Tel Aviv University \\
% Email: uri@eng.tau.ac.il}

%\thanks{This work was supported in part by the Israel Science Foundation under Grant No. 1956/15.}
%\thanks{E. Domanovitz and U. Erez are with the Department of Electrical Engineering -- Systems, Tel Aviv University, Tel Aviv, Israel (email: domanovi,uri@eng.tau.ac.il).}
}
\maketitle

\begin{abstract}
%To be considered for the 2017 IEEE Jack Keil Wolf ISIT Student Paper Award.
%
The performance of integer-forcing equalization for communication over the compound multiple-input multiple-output channel is investigated.
An upper bound on the resulting outage probability as a function of the gap to capacity has been derived previously, assuming a random precoding
matrix drawn from the circular unitary ensemble is applied prior to transmission. In the present work a simple and explicit lower bound on the worst-case outage probability is derived for the case of a system with two transmit antennas and two or more receive antennas, leveraging the properties of the Jacobi ensemble. The derived lower bound is also extended to random space-time precoding,  and may serve as a useful benchmark for assessing the relative merits of various algebraic space-time precoding schemes.
We further show that the lower bound may be adapted to the case of a $1 \times N_t$ system. As an application of this, we derive closed-form bounds for the symmetric-rate capacity of the Rayleigh fading multiple-access channel where all terminals are equipped with a single antenna. Lastly, we  demonstrate that the integer-forcing equalization coupled with distributed space-time coding is able to approach these bounds.
%integer-forcing for this setting and show that (distributed) precoding %can assist in closing the gap from the lower bound.
\end{abstract}

%=============================================================================

%
\section{Introduction}
%
%=============================================================================
%\blfootnote{978-1-5090-2152-9/16/\$31.00 \copyright2016 IEEE}
\label{sec:intro}
This paper addresses communication over a compound multiple-input multiple output (MIMO) channel, where the transmitter only knows the number of transmit antennas and the mutual information.
More specifically, the goal of this work is to
assess the performance
%of practical communication schemes for %this scenario, and in
%particular the performance
of (randomly precoded) integer-forcing (IF) equalization for such a scenario. %equalization.
%More formally, we address the problem of coding over a compound MIMO channel, as is the %case in multicast communication.
%Unlike the case of single-user closed-loop communication, designing a transmission scheme based only on the channel's mutual information (assuming some input covariance matrix, e.g., isotropic transmission) is a challenging task as the number of data streams, any linear precoding employed,  constellation size, and any other transmission design parameters cannot be tailored to the specific channel.

%The design of a practical coding scheme for a compound MIMO channel
%using an
Communication over the compound MIMO channel using an architecture employing space-time linear processing at the transmitter side and IF equalization at the receiver side was proposed in \cite{OrdentlichErez:IFUniversallyAchievesCapacityUpToGap:2013}.
It was shown that such an architecture \emph{universally} achieves capacity up to a constant gap, provided that the precoding matrix corresponds to a linear perfect space-time code \cite{elia2007perfect}, \cite{oggier2006perfect}.
%\cite{ExplicitSpaceTimeCodesAchievingtheDiversity:Elia2006}.

Recently, in \cite{domanovitz2017outage}, the outage probability of IF where  random unitary precoding is applied over the spatial dimension only was considered and
an explicit universal upper bound on the outage probability for a given target rate and gap to capacity was derived.
%It was shown that if one allows a small outage probability %(w.r.t. the precoding ensemble),
%such precoding guarantees a much smaller gap to capacity
%than that possible with precoding.
%is much tighter than the gap for successful transmission derived in %\cite{OrdentlichErez:IFUniversallyAchievesCapacityUpToGap:2013}.
%empirically

In the present work we derive an explicit lower bound on this outage probability for the case of a system with two transmit antennas. We further
extend the framework of \cite{DomanovitzE16} by considering also
space-time random unitary precoding  (rather than space-only).

\section{Problem Formulation and Preliminaries}
\label{sec:channel_model}
\subsection{Channel Model}
The  (complex) MIMO channel is described
by\footnote{We denote all complex variables with $c$ to distinguish them from their real-valued representation.}
\begin{align}
\boldsymbol{y}_c=\svv{H}_c\boldsymbol{x}_c+\boldsymbol{z}_c,
\label{eq:channel_model}
\end{align}
where $\svv{x}_c\in \mathbb{C}^{\Nt}$ is the channel input vector, $\svv{y}_c \in \mathbb{C}^{N_r}$ is the channel output vector, $\svv{H}_c$ is an $N_r\times \Nt$ complex channel matrix, and
$\boldsymbol{z}_c$ is an additive noise vector of i.i.d. unit-variance circularly symmetric complex Gaussian random variables.
We assume that the channel is fixed throughout the transmission period.
Further, we may assume without loss of generality that the input vector $\boldsymbol{x}_c$ is subject to the power constraint\footnote{We denote by $[\cdot]^T$, the transpose of a vector/matrix and by $[\cdot]^H$, the Hermitian transpose.}
\begin{eqnarray}
\mathbb{E}(\boldsymbol{x}_c^H\boldsymbol{x}_c)\leq \Nt.
%\cdot \SNR,
\nonumber
\end{eqnarray}

Consider the mutual information achievable with a Gaussian isotropic or ``white'' input (WI)
\begin{align}
%\frac{1}{2}
%C_{\rm WI}=\log \det \left(\svv{I}_{N_r\times N_r}+ \svv{H}_c\svv{H}_c^T\right).
C&=\log \det \left(\svv{I}_{N_r}+ \svv{H}_c\svv{H}_c^H\right) \nonumber \\
&=\log \det \left(\svv{I}_{N_t}+ \svv{H}_c^H\svv{H}_c\right)
\label{eq:c_WI}
\end{align}
We may define the set of all channels with $N_t$ transmit antennas
(and arbitrary $N_r$) having  the same WI mutual information $C$
\begin{align}
%\mathbb{H}(C_{\rm WI}) = \left\{ \svv{H}\in \mathbb{R}^{N_r \times \Nt} : \frac{1}{2}\log \det \left(I +  \svv{H}^T \svv{H}\right) = C_{\rm WI} \right\},
\mathbb{H}(C) = \left\{ \svv{H}_c : \log \det \left(\svv{I}_{N_t} +  \svv{H}_c \svv{H}^H_c\right) = C\right\}.
\label{eq:setOfChannels}
\end{align}
The corresponding compound channel model is defined by (\ref{eq:channel_model}) with the channel matrix $\svv{H}_c$ arbitrarily chosen from the set $\mathbb{H}(C)$. The matrix $\svv{H}_{c}$ is known to the receiver, but not to the transmitter. Clearly, the capacity of this compound channel is $C$, and is achieved with an isotropic Gaussian input.

Applying  the singular-value decomposition (SVD) to the channel matrix, $\svv{H}_c=\svv{U}_c\svv{\Sigma}_c\svv{V}_c^H$, we note that the unitary matrices have no impact on the mutual information.  Let $\svv{D}_c$ be defined by
\begin{align}
    \left(\svv{I}_{N_t} +  \svv{H}_c^H \svv{H}_c\right)=\svv{U}_c\svv{D}_c\svv{U}_c^H.
\end{align}
and note that $\svv{D}_c=\svv{I}+\svv{\Sigma}_c^H\svv{\Sigma}_c$.
Thus, the compound set (\ref{eq:setOfChannels}) may equivalently be
described by constraining $\svv{D}_c$ to belong to the set
\begin{align}
%\mathbb{H}(C_{\rm WI}) = \left\{ \svv{H}\in \mathbb{R}^{N_r \times \Nt} : \frac{1}{2}\log \det \left(I +  \svv{H}^T \svv{H}\right) = C_{\rm WI} \right\},
\mathbb{D}(C) = \left\{N_t \times N_t~ {\rm diagonal}~\svv{D}_c : \sum \log(d_{i,i}) = C\right\}.
\label{eq:dc}
\end{align}
%as a compound set of diagonal channels. Obviously, bounding %an expression over (\ref{eq:dc}) is simpler than bounding %over (\ref{eq:setOfChannels}).

% Denote the achievable rate for a given channel matrix $\svv{H}_c$ with an IF receiver as $R_{\rm IF}(\svv{H}_c)$. An explicit expression for this rate is given in Section \ref{sec:IFequalizationSubSecNoSIC}. Since applying a precoding matrix $\svv{P}_c$ results in an effective channel $\svv{H}_c\cdot\svv{P}_c$, it follows that the achievable rate of IF for this channel is \mbox{$R_{\rm IF}(\svv{H}_c\cdot\svv{P}_c)$}.

%Similarly, we may define the outage capacity at a fixed outage probability as the maximal rate possible while not exceeding the desired probability of outage.
%The goal of this paper is to quantify the tradeoff between the transmission rate $R$ and the outage probability of IF as defined in (\ref{eq:P_WC_OUT}).

% \subsection{Integer-forcing: background}
% \label{sec:IFequalizationSec}

%The corresponding compound channel model is defined %by (\ref{eq:channel_model}) with the channel matrix %$\svv{H}_c$ arbitrarily
%chosen from the set $\mathbb{H}(C)$. The matrix %$\svv{H}_{c}$ that was chosen by nature is revealed
%to the receiver, but not to the
%transmitter. Clearly, the capacity of this compound %channel is $C$, and is achieved with an isotropic %Gaussian input.

We turn now to the performance of IF.
It has been observed that employing the IF receiver allows approaching $C$ for ``most'' but not all matrices $\svv{H}_c\in\mathbb{H}(C)$.
In the present work, we quantify the  measure of the set of bad channel matrices by considering outage events, i.e., those events (channels)
where integer forcing fails even though the channel has sufficient mutual information.
The probability space here is induced by considering
a randomized scheme where a random unitary
precoding matrix $\svv{P}_c$ is applied prior
to transmission over the channel.
%In other words, the role here of random precoding is in ``symmetrizing" the communication %problem; i.e., averaging the
%probability of outage over the precoding ensemble.

More specifically, denoting by $R_{\rm IF}(\svv{H}_c)$ the rate achievable with IF over a channel $\svv{H}_c$, the achievable rate of the randomized scheme is $R_{\rm IF}(\svv{H}_c\cdot\svv{P}_c)$. As $\svv{P}_c$ is drawn at random, the latter rate is also random.
Following  \cite{DomanovitzE16}, we define the worst-case (WC) outage probability of randomized IF as
\begin{align}
     %\prob_{\rm OPT, OL, out}(\SNR)\triangleq\prob\left[\log_2\left|\svv{I}_{\Nt\times \Nt}+\frac{\SNR}{\Nt}\svv{H}\svv{H}^H\right|<R(\SNR)\right].
     %\prob_{\rm Outage}\leq\prob_{\forall\svv{H}\in\mathbb{H},\svv{P}}\left[R_{\svv{P}}<R\right].
     P^{\rm WC,IF}_{\rm out}(C,\Delta C)= \sup_{\svv{H}_c\in\mathbb{H}(C_{})}\prob\left(R_{\rm IF }(\svv{H}_c\cdot\svv{P}_c)<C-\Delta C \right),
     \label{eq:P_WC_OUT}
\end{align}
% For example, when each data streams is encoded with the same linear code and the receiver applies IF, we have
% \begin{align}
%      %\prob_{\rm OPT, OL, out}(\SNR)\triangleq\prob\left[\log_2\left|\svv{I}_{\Nt\times \Nt}+\frac{\SNR}{\Nt}\svv{H}\svv{H}^H\right|<R(\SNR)\right].
%      %\prob_{\rm Outage}\leq\prob_{\forall\svv{H}\in\mathbb{H},\svv{P}}\left[R_{\svv{P}}<R\right].
%      P^{\rm WC,IF}_{\rm out}(C_{},R)= \sup_{\svv{H}_c\in\mathbb{H}(C_{})}\prob\left(R_{\rm IF}(\svv{H}_c\cdot\svv{P})<R\right),
%      \label{eq:P_WC_OUT}
% \end{align}
where the probability is with respect to the ensemble of precoding matrices and $R_{\rm IF}(\cdot)$ is the achievable rate of IF as given in \cite{IntegerForcing}.

% (\ref{eq:P_WC_OUT}) for $R_{\rm IF-SIC}(\svv{H}_c)$ which is  Section~\ref{sec:IFequalizationSubSecNoSIC}.

Note that in (\ref{eq:P_WC_OUT}), we take the supremum over the entire compound class rather than taking the average with respect to some putative distribution over $\mathbb{H}(C_{})$. It follows that $P^{\rm WC,IF}_{\rm out}(C,R)$ provides an upper bound on the outage probability that holds for any such distribution.

Clearly (\ref{eq:P_WC_OUT}) is not an explicit bound. Nonetheless, by restricting attention to a uniform (Haar) measure over the unitary precoding matrices, we are  able
to obtain closed-form upper as well as lower bounds.

Specifically, we consider precoding matrices $\svv{P}_c$ drawn from the circular unitary ensemble (CUE), see e.g., \cite{metha1967random}.
%{edelman1988eigenvalues}.
Applying  the SVD to the effective channel, we have
%\begin{align}
    $\svv{H}_c\svv{P}_c=\svv{U}_c\svv{\Sigma}_c\svv{V}_c^T\svv{P}_c$.
%\end{align}
From the properties of the CUE it follows that $\svv{V}_c^T\svv{P}_c$ has the same
(CUE) distribution as $\svv{P}_c$.
Thus, $\svv{V}_c$ (and of course  $\svv{U}_c$) plays no role in (\ref{eq:P_WC_OUT}) and we may rewrite the latter
as
%It follows that for the sake of computing the scheme outage %probability when CUE precoding is applied, it suffices to %consider the compound set of diagonal channels.
%%$\svv{H}_c=\svv{\Sigma}_c$ such that %%$\det\left(\svv{I}+\svv{\Sigma}_c\svv{\Sigma}_c^H\right)=2^{%C_{}}$.
%%Denoting $\svv{D}_c=\svv{I}+\svv{\Sigma}_c\svv{\Sigma}_c^H$, %
%Hence, (\ref{eq:P_WC_OUT}) becomes
\begin{align}
     %\prob_{\rm OPT, OL, out}(\SNR)\triangleq\prob\left[\log_2\left|\svv{I}_{\Nt\times \Nt}+\frac{\SNR}{\Nt}\svv{H}\svv{H}^H\right|<R(\SNR)\right].
     %\prob_{\rm Outage}\leq\prob_{\forall\svv{H}\in\mathbb{H},\svv{P}}\left[R_{\svv{P}}<R\right].
     P^{\rm WC,IF}_{\rm out}(C,\Delta C)= \sup_{\svv{D}_c\in\mathbb{D}(C_{})}\prob\left(R_{\rm IF }(\svv{D}_c\cdot\svv{P}_c)<C-\Delta C \right),
     \label{eq:P_WC_OUT2}
\end{align}
and thus the analysis for CUE precoding is greatly simplified.\footnote{We note that in many natural statistical scenarios, including that of an i.i.d. Rayleigh fading environment, the random transformation is actually performed by nature.}

Both the upper and lower bounds for (\ref{eq:P_WC_OUT2}) developed below heavily rely on the well-studied properties of the CUE. For the lower bound we utilize, following the approach of \cite{dar2013jacobi}, the Jacobi distribution \cite{edelman2005random} which gives the eigenvalue distribution of submatrices of such matrices.

We similarly denote by $P^{\rm WC,IF-SIC}_{\rm out}(C,\Delta C)$ the WC outage probability of IF with successive interference cancellation (SIC), the rate of which we denote by  $R_{\rm IF-SIC}(\svv{H}_c)$ and for which we give an explicit expression next.

%The reader is referred to \cite{IntegerForcing} and \cite{OrdentlichErezNazer:2013} for %the derivation, detailed expressions and proofs of IF and IF-SIC.

\subsection{Integer-Forcing Equalization: Achievable Rates}
\label{sec:IFequalizationSubSecNoSIC}

We begin by recalling the achievable rates  of the IF equalization scheme, where the reader is referred to \cite{IntegerForcing} and \cite{OrdentlichErezNazer:2013} for the derivation, details and proofs. Furthermore, we follow
the notation of these works, and in particular we present IF over the reals. We also focus our attention
on IF receivers employing successive interference cancellation (SIC).

For a given choice of (invertible) integer matrix $\svv{A}$, let $\svv{L}$ be defined by the following Cholesky decomposition
\begin{align}
    \label{eq:Kzz}
    \svv{A}\left(\svv{I}+\svv{H}^T\svv{H}\right)^{-1}\svv{A}^T=\svv{L}\svv{L}^T.
\end{align}
% \begin{align}
%     \label{eq:Kzz}
%   %{\svv{K}_{\boldsymbol{z}_{\rm eff}\boldsymbol{z}_{\rm eff}}}= \svv{A}\left(\svv{I}+\SNR\svv{H}^T\svv{H}\right)^{-1}\svv{A}^T=\svv{L}\svv{L}^T
%   %{\svv{K}_{\boldsymbol{z}_{\rm eff}\boldsymbol{z}_{\rm eff}}}= \svv{A}\left(\svv{I}+\svv{H}^T\svv{H}\right)^{-1}\svv{A}^T=\svv{L}\svv{L}^T.
%   \svv{A}\left(\svv{I}+\svv{H}^T\svv{H}\right)^{-1}\svv{A}^T&=\svv{A}\svv{V}\svv{D}^{-1}\svv{V}^T\svv{A}^T \\
%   &=\svv{L}\svv{L}^T.
% \end{align}
Denoting by $\ell_{m,m}$ the diagonal entries of $\svv{L}$, IF-SIC can achieve  \cite{OrdentlichErezNazer:2013} any rate satisfying
%\footnote{We note
%that since we choose to work with equal-rate streams, the constraints on
%the achievable rate tuples of IF with SIC, as stated in
%Theorem~2 of \cite{OrdentlichErezNazer:2013}, play no role in the present work.}
$R<R_{\rm IF-SIC}(\svv{H})$ where
\begin{align}
 %R_{\rm IF-SIC}<\Nt\max_{\svv{A}}\min_{m=1,...,\Nt}\log\left(\frac{\SNR}{\ell_{m,m}^2}\right).
 R_{\rm IF-SIC}(\svv{H})&=2\Nt\frac{1}{2}\max_{\svv{A}}\min_{m=1,...,2\Nt}\log\left(\frac{1}{\ell_{m,m}^2}\right) %\nonumber %\\
% &=R_{\rm IF-SIC}(\svv{D},\svv{V}),
 \label{eq:IF-SIC-rate}
\end{align}
and the maximization is over all  $2\Nt\times 2\Nt$ full-rank integer matrices.

%We describe the operation of the IF-SIC receiver, adopting the nomenclature of %\cite{OrdentlichErezNazer:2013}.
%First, calculate:
%begin{enumerate}
%\item The covariance matrix (\ref{eq:Kzz}) of the effective noise that arises when %using the MMSE equalizer $ \svv{B}_{\rm %INT}=\svv{A}\svv{H}^T\left(\svv{I}+\svv{H}\svv{H}^T\right)^{-1} $.
%%for the integer matrix $\svv{A}$.
%\item The optimal SIC matrix $\svv{S}$ as:
%    \begin{eqnarray}
%    \svv{S}={\rm diag}(\ell_{11},...,\ell_{MM})\cdot\svv{L}^{-1}.
%    \end{eqnarray}
%\item The optimal combined linear front end processing matrix:
%    \begin{eqnarray}
%    %\widetilde{\svv{B_{\rm %INT}}}=\svv{R}\svv{A}\svv{H}^T\left(\frac{1}{\SNR}\svv{I}+\svv{H}\svv{H}^T\right)^{-1}5.
%    \widetilde{\svv{B_{\rm INT}}}
%    &=& \svv{S} \svv{B_{\rm INT}} \nonumber \\ %&=&\svv{S}\svv{A}\svv{H}^T\left(\svv{I}+\svv{H}\svv{H}^T\right)^{-1}.
%    \end{eqnarray}
%\end{enumerate}

%Note that this change of linear post-processing is essential to guarantee that the %resulting noise variance is minimized.
%The outputs of decoders $1,\ldots,m-1$ are multiplied by $S_{m,1},\ldots,S_{m,m-1}$, %respectively, and are then subtracted from the input to decoder $m$, thereby %performing SIC.

\subsection{The Jacobi Ensemble}

% In the analysis we carry out, the distribution of the singluar values of a \emph{submatrix} of $\svv{P}_c$ will play a central role. As detailed in \cite{dar2013jacobi}, the singular values of the $\frac{\Nt}{2}\times k$ submatrix of the $\Nt \times \Nt$ unitary  matrix $\svv{P}_c$ have a Jacobi distribution (which depends on $k$). The explicit distribution if given in \cite{FullPaper:Domanovitz2017}.

In the analysis we carry out, the distribution of the singluar values of a \emph{submatrix} of $\svv{P}_c$ will play a central role. To that end, we recall the Jacobi ensemble which is defined as follows
\begin{definition}
(Jacobi ensemble).  The  $\mathcal{J}(m_1, m_2, n)$ ensemble, where $m_1, m_2 \geq n$, is an $n \times  n$  Hermitian
matrix which can be constructed as $\svv{A}\left(\svv{A}+\svv{B}\right)^{-1}$, where $\svv{A}$ and $\svv{B}$ belong to the Wishart ensembles $\mathcal{W}(m_1, n)$ and $\mathcal{W}(m_2, n)$, respectively.
\end{definition}
% If $\svv{P}_c$ is an $n \times n$  unitary matrix randomly drawn from the CUE, we define the $m_1 \times m_2$ submatrix obtained by taking the first $m_1$ rows and $m_2$ columns as the Jacobi ensemble, denoted $\mathcal{J}(m_1,m_2,n)$
% [ELAD TO CHECK !!!]

We recall the well-known (see \cite{dar2013jacobi} and references therein)
%\cite{muirhead1982aspects}, \cite{forrester2006quantum}
joint probability density function of the ordered eigenvalues \mbox{$0\leq\lambda_1\leq\cdots\lambda_n\leq 1$} of the Jacobi ensemble $\mathcal{J}(m_1,m_2,n)$. Namely,
\begin{align}
&f(\lambda_1,\cdots,\lambda_n)=\nonumber \\
&\kappa^{-1}(m_1,m_2,n)\prod_{i=1}^{n}\lambda_i^{m_1-n}(1-\lambda_i)^{m_2-n}\prod_{i<j}(\lambda_i-\lambda_j)^2,
\label{eq:jacobi}
\end{align}
where $\kappa^{-1}(m_1,m_2,n)$ is a normalizing factor (Selberg integral), i.e. (see, e.g., \cite{selberg1944remarks}),
% \begin{align}
% K_{m_1,m_2,n}=\prod_{j=1}^{n}\frac{\Gamma(m_1+j)\Gamma(m_2+j)\Gamma(1+j)}{\Gamma(2)\Gamma(m_1+m_2+n+j)}.
% \end{align}
\begin{align}
\kappa_{m_1,m_2,n}=\prod_{j=1}^{n}\frac{\Gamma(m_1-n+j)\Gamma(m_2-n+j)\Gamma(1+j)}{\Gamma(2)\Gamma(m_1+m_2-n+j)}.
\end{align}

As detailed in \cite{dar2013jacobi}, the singular values of the $\frac{\Nt}{2}\times k$ submatrix of the $\Nt \times \Nt$ unitary  matrix $\svv{P}_c$ have the following Jacobi distribution:
\begin{itemize}
\item
     When $1\leq k \leq \frac{\Nt}{2}$, the \emph{singular} values of the submatrix have the same distribution as the \emph{eigenvalues} of the Jacobi ensemble $\mathcal{J}(\frac{\Nt}{2},\frac{\Nt}{2},k)$.
\item
    When $\frac{\Nt}{2} < k \leq \Nt$, using Lemma~1 in \cite{dar2013jacobi}, we have that the \emph{singular} values of the submatrix have the same distribution as the \emph{eigenvalues} of the Jacobi ensemble $\mathcal{J}(\frac{\Nt}{2},\frac{\Nt}{2},\Nt-k)$.
\end{itemize}

\section{Closed-Form Bounds for $N_r\times2$ channels}
\subsection{Space-Only Precoded Integer-Forcing}
\subsubsection{Upper Bound}
\label{sec:precIFequalizationWithSic}

We recall known upper bounds for the achievable WC outage probability of CUE-precoded IF-SIC for $N_r\times 2$ channels. The following theorem  combines Theorem 2, Lemma 4 and Corollary 2 of \cite{DomanovitzE16}.
%that upper bounds the outage %probability of  P-IF-SIC.
\begin{theorem}
\label{thm:thm2}{\cite{DomanovitzE16}}
For any $N_r\times 2$ complex channel $\svv{H}_c$ with white-input mutual information $C>1$, i.e., $\svv{D}\in\mathbb{D}(C)$, and for $\svv{P}_c$ drawn from the CUE (which induces a real-valued precoding matrix $\svv{P}$), we have
\begin{align}
P^{\rm WC,IF-SIC}_{\rm out}\left(C,\Delta C\right)\leq 81\pi^2 2^{-\Delta C },
\end{align}
for $\Delta C>1$.
A tighter yet less explicit bound is
%is obtained by applying Corollary 2 of \cite{DomanovitzE16} %to Lemma 4 in \cite{DomanovitzE16}, yielding
\begin{align}
%\prob\left(R_{\rm IF-SIC}(\svv{D},\svv{V})<C-\Delta C\right)<\\
P^{\rm WC,IF-SIC}_{\rm out}\left(C,\Delta C\right)\leq \max_{d_{\max}}
\sum_{{\bf a}\in\mathbb{B}(\genGam,d_{\max})}\frac{2\pi^2{2^{\nicefrac{-3}{4}(C+\Delta C)}}}{{\pi^2}\frac{\aNorm^3}{2^C}\sqrt{d_{\max}}},
\nonumber
\end{align}
where $\displaystyle{d_{\max}=\max_{i}d_i}$ and
{\small
\begin{align}
\mathbb{B}(\genGam,d_{\max})=\left\{{\bf a}:0<\|{\bf a}\|<\sqrt{{\genGam}d_{\max}} \:  {\rm and} \: \nexists 0<c<1 \: : \: c{\bf a}\in\mathbb{Z}^n \right\} \nonumber
\end{align}
}%
with $\genGam=2^{\nicefrac{-1}{2}(C+\Delta C)}$.
\end{theorem}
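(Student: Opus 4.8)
The plan is to bound the worst-case outage probability by a union bound over primitive integer vectors, to evaluate each term through the (here degenerate) Jacobi density recalled in (\ref{eq:jacobi}), and finally to collapse the resulting lattice sum into closed form. Throughout I work with the reduced expression (\ref{eq:P_WC_OUT2}), in which $\svv{H}_c$ is replaced by a diagonal $\svv{D}_c\in\mathbb{D}(C)$ and $\svv{P}$ is a Haar-distributed real orthogonal matrix of size $2\Nt=4$.

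First I would carry out the geometric reduction. By the Cholesky identity (\ref{eq:Kzz}), the rate (\ref{eq:IF-SIC-rate}) of IF-SIC over $\svv{D}_c\svv{P}_c$ is controlled by the Gram--Schmidt lengths $\ell_{m,m}$ of the best integer basis of the lattice $\Lambda=\svv{D}^{-1/2}\svv{P}\,\mathbb{Z}^{2\Nt}$, whose covolume is pinned to $2^{-C}$ since $\svv{D}\in\mathbb{D}(C)$; as $\Nt=2$ this lattice lives in four real dimensions and $\prod_m\ell_{m,m}^2=2^{-2C}$ for every unimodular basis, so that perfectly balanced lengths $\ell_{m,m}^2=2^{-C/2}$ attain the full rate $C$. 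The event $R_{\rm IF-SIC}<C-\Delta C$ forces, for every basis, the largest $\ell_{m,m}^2$ to exceed $2^{-(C-\Delta C)/2}$; because the product of the four lengths is fixed, such an imbalance can arise only when $\Lambda$ carries an anomalously short vector. I would make this precise by proving that outage implies the existence of a primitive $\svv{a}$ with $\|\svv{D}^{-1/2}\svv{P}\svv{a}\|^2<\genGam$, where $\genGam=2^{-(C+\Delta C)/2}$. Since $\svv{D}^{-1}\succeq d_{\max}^{-1}\svv{I}$, any such vector obeys $\aNorm^2<\genGam\,d_{\max}$, which together with primitivity is precisely the index set $\mathbb{B}(\genGam,d_{\max})$.

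Next I would apply the union bound
\[
P^{\rm WC,IF-SIC}_{\rm out}(C,\Delta C)\le\sum_{\svv{a}\in\mathbb{B}(\genGam,d_{\max})}\prob\!\left(\|\svv{D}^{-1/2}\svv{P}\svv{a}\|^2<\genGam\right),
\]
and evaluate each term. For fixed $\svv{a}$, Haar invariance makes $\svv{P}\svv{a}/\aNorm$ uniform on the unit sphere in $\mathbb{R}^4$, so $\|\svv{D}^{-1/2}\svv{P}\svv{a}\|^2=\aNorm^2\big(w/d_{\max}+(1-w)/d_{\min}\big)$, where $w$ is the fraction of energy falling in the two-dimensional $d_{\max}$-subspace and $d_{\min}$ is the smaller diagonal entry. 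For $\Nt=2$ this fraction is exactly the eigenvalue of the $\mathcal{J}(\tfrac{\Nt}{2},\tfrac{\Nt}{2},1)=\mathcal{J}(1,1,1)$ ensemble, whose density (\ref{eq:jacobi}) is uniform on $[0,1]$. Integrating this density over the range where the shortness condition holds gives the exact probability $d_{\min}(\genGam d_{\max}-\aNorm^2)/\big(\aNorm^2(d_{\max}-d_{\min})\big)$; bounding $\aNorm^{-2}\le\sqrt{\genGam d_{\max}}\,\aNorm^{-3}$ (valid on $\mathbb{B}$) and $d_{\max}-d_{\min}\approx d_{\max}$ produces the per-vector bound $\genGam^{3/2}\sqrt{d_{\max}}\,d_{\min}/\aNorm^3$, which, after substituting $d_{\max}d_{\min}=2^{C}$ and $\genGam^{3/2}=2^{-3(C+\Delta C)/4}$, is exactly the summand of the tighter statement.

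Finally, to reach $81\pi^2 2^{-\Delta C}$ I would over-approximate the lattice sum $\sum_{\svv{a}\in\mathbb{B}}\aNorm^{-3}$ by the integral over the four-dimensional ball of radius $R=\sqrt{\genGam d_{\max}}$; since $\int_{\aNorm<R}\aNorm^{-3}\,d\svv{a}$ is the surface area $2\pi^2$ of $S^3$ times $\int_0^R\!dr=R$, the sum grows only linearly in $R$, and multiplying by the per-vector prefactor yields $\lesssim 4\pi^2\genGam^2 d_{\max}d_{\min}=4\pi^2\genGam^2 2^{C}=4\pi^2 2^{-\Delta C}$. Notably the $d_{\max}$ dependence cancels, so the maximization over $d_{\max}$ in the tighter bound returns a profile-free constant; conservatively tracking the slack lost in replacing the primitive-vector sum by the full integral, and in the approximation $d_{\max}-d_{\min}\approx d_{\max}$ (legitimate for $\Delta C>1$), inflates $4\pi^2$ to the stated $81\pi^2$. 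The main obstacle is the reduction of the second paragraph: turning ``the Cholesky profile is unbalanced by at least $\Delta C$ for every unimodular $\svv{A}$'' into the single-vector shortness event with the sharp threshold $\genGam=2^{-(C+\Delta C)/2}$, since a crude successive-minima/transference argument loses a factor in the exponent and one must instead exploit the exact four-dimensional interplay between the Gram--Schmidt lengths and the shortest lattice vector; once this equivalence and its constant are in hand, the density evaluation and the lattice-sum estimate are essentially mechanical.
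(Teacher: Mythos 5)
The first thing to note is that the paper does not actually prove this statement: Theorem~\ref{thm:thm2} is imported verbatim from \cite{DomanovitzE16} (the text says it ``combines Theorem 2, Lemma 4 and Corollary 2'' of that reference), so there is no in-paper proof to compare your argument against. Judged on its own terms, your outline has the right architecture and very likely mirrors the source: a union bound over primitive integer vectors, a per-vector probability computed from the rank-one (uniform) Jacobi law of the energy fraction $w$, and a lattice-sum estimate. Indeed your per-vector term $\genGam^{3/2}\sqrt{d_{\max}}\,d_{\min}/\aNorm^3$ reproduces the stated summand exactly once $d_{\max}d_{\min}=2^{C}$ is substituted, which is strong evidence that you have reverse-engineered the intended proof structure.

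As a proof, however, the proposal has three genuine gaps. First and most seriously, the reduction ``outage for every unimodular $\svv{A}$ $\implies$ existence of a primitive $\svv{a}$ with $\|\svv{D}^{-1/2}\svv{P}\svv{a}\|^{2}<\genGam$'' is the entire hard content of the theorem: it is a transference statement relating the best achievable worst-case Gram--Schmidt norm over all bases to the first successive minimum, with a sharp constant in dimension four, and you explicitly defer it. Without that lemma neither displayed bound is established. Second, replacing $d_{\max}-d_{\min}$ by $d_{\max}$ in the denominator of the exact probability $d_{\min}(\genGam d_{\max}-\aNorm^2)/\bigl(\aNorm^2(d_{\max}-d_{\min})\bigr)$ makes the expression \emph{smaller}, i.e., it goes the wrong way for an upper bound; the well-conditioned regime $d_{\max}\downarrow d_{\min}$ must be handled separately (there the probability saturates rather than diverges, but that needs to be argued, and ``legitimate for $\Delta C>1$'' is not an argument). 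Third, the passage from your integral estimate (which you compute as roughly $4\pi^{2}\,2^{-\Delta C}$) to the stated constant $81\pi^{2}$ is unaccounted for: approximating the sum of $\aNorm^{-3}$ over $\mathbb{B}(\genGam,d_{\max})$ by $\int_{\aNorm<R}\aNorm^{-3}\,d\svv{a}$ is not automatically an upper bound without controlling lattice points near the origin and the discretization error, and absorbing a factor of roughly twenty into ``conservative slack'' leaves the explicit bound unproven. The skeleton is right; the load-bearing lemma and the constants are not there.
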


\subsubsection{Lower Bound on the Outage Probability via Maximum-Likelihood Decoding}
\label{sec:lowerbound}
It is natural to compare the performance attained by an IF receiver with that of an optimal maximum likelihood (ML) decoder for the same precoding scheme but where each stream
is coded using an independent Gaussian codebook. % as suggested in %\cite{IntegerForcing}.
Since we are confining the encoders to operate in
parallel (independent streams), we are in fact considering
coding over a MIMO multiple-access channel (MAC).

Thus, a simple upper bound on the achievable rate
of integer-forcing  is the capacity of the MIMO MAC with
independent Gaussian codebooks of equal rates \cite{IntegerForcing}. Specifically,
%it is of interest In order to quantify the resulting performance, we take as a %benchmark the rate achieved by joint decoding across all $N_r$ receive antennas.
let $\svv{H}_S$ denote the submatrix of $\svv{H}_c\svv{P}_c$ formed by taking the columns with indices in $S\subseteq \{1, 2,..., \Nt \}$. For a joint ML decoder, the maximal achievable rate  over the considered MIMO multiple-access channel is
\begin{align}
    R_{\rm ML}=\min_{S\subseteq\{1, 2,\ldots,\Nt\}}\frac{\Nt}{|S|}\log\det\left(\svv{I}_{N_r}+\svv{H}_S\svv{H}_S^H\right).
    \label{eq:R_ML_spaceOnly}
\end{align}
Note that since  $\svv{H}_c\svv{P}_c$  and thus also $\svv{H}_S$ depends on the random
precoding matrix $\svv{P}_c$, $ R_{\rm ML}$ is a random variable.

We next derive the exact WC scheme outage for ML decoding when CUE precoding is applied (with independent Gaussian codebooks) over a MIMO channel with two transmit antennas.

When $\Nt=2$, the SVD decomposition of $\svv{H}_c\svv{P}_c$ can be written as
\begin{align}
    %\svv{H}_c=\svv{U}_c\begin{bmatrix}\sqrt{\rho_1} & 0 \\ 0 & \sqrt{\rho_2} \\ 0 & 0 \\ \vdots & \vdots \\ 0 & 0 \end{bmatrix}\svv{V}_c^H\svv{P}_c.
     \svv{H}_c\svv{P}_c=\svv{U}_c
     \begin{bmatrix}
        \sqrt{\rho_1} & 0 & 0 & \cdots & 0 \\
        0 & \sqrt{\rho_2} & 0 & \cdots & 0
     \end{bmatrix}^H
     \svv{V}_c^H\svv{P}_c,
    \label{eq:SVDofHc}
\end{align}
where $\rho_i=\svv{\Sigma}_{i.i}^2$. Substituting the latter in (\ref{eq:c_WI}) yields
\begin{align}
C_{} =\log(1+\rho_1)+\log(1+\rho_2).
\label{eq:C_rho1_rho2}
\end{align}

\begin{theorem}
% \begin{align}
% {\begin{bmatrix}
%                 D1 & 0 & 0 & 0 & \cdots & 0 \\
%                 \vdots & \ddots & 0 & 0 & \ddots & 0 \\
%                 0 & \cdots & D1 & 0 & \cdots &  0 \\
%                 0 & \cdots & 0 & D2 & \cdots & 0 \\
%                 \vdots  & \ddots & 0 & 0 & \ddots & 0\\
%                 0 & \cdots & 0 & 0 & \cdots & D2 \\
%                 \end{bmatrix}}\footnotesize{\begin{bmatrix}
%                 \tikzmark{left1}\svv{V}_{1,1} & \svv{V}_{1,2} & \cdots & \svv{V}_{1,\Nt T} \\
%                 \vdots & \vdots & \cdots & \vdots \\
%                 \svv{V}_{\frac{\Nt T}{2},1} \tikzmark{right1} &  \svv{V}_{\frac{\Nt T}{2},2} & \cdots & \svv{V}_{\frac{\Nt T}{2},\Nt T} \\
%                 \tikzmark{left2} \svv{V}_{\frac{\Nt T}{2}+1,1} & \svv{V}_{\frac{\Nt T}{2}+1,2} & \cdots & \svv{V}_{\frac{\Nt T}{2}+1,\Nt T} \\
%                 \vdots & \vdots & \cdots & \vdots \\
%                 \svv{V}_{\Nt T,1}\tikzmark{right2} & \svv{V}_{\Nt T,2} & \cdots & \svv{V}_{\Nt T,\Nt T} \\
%                 \end{bmatrix}}
% \end{align}

For a CUE-precoded $N_r\times 2$ compound MIMO channel with white-input mutual information $C$ and $N_r\geq 2$, we have
\begin{align}
    P^{\rm WC}_{\rm out,ML}(C,\Delta C) = 1-\sqrt{1-2^{-\Delta C}}.
\end{align}
\label{thm:thm1}
\end{theorem}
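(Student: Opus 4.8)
The plan is to evaluate the supremum defining $P^{\rm WC}_{\rm out,ML}(C,\Delta C)$ explicitly for $\Nt=2$ by reducing $R_{\rm ML}$ to a function of the two column norms of the effective channel, computing the outage probability conditioned on the channel gains, and then optimizing over the compound class $\mathbb{D}(C)$. For $\Nt=2$ the set $S$ in (\ref{eq:R_ML_spaceOnly}) ranges over $\{1\},\{2\},\{1,2\}$. The term for $S=\{1,2\}$ equals the white-input mutual information $C$, which never drops below $C-\Delta C$ and is thus irrelevant to outage. Writing $\svv{h}_1,\svv{h}_2$ for the columns of $\svv{H}_c\svv{P}_c$, the singleton terms are $2\log(1+\|\svv{h}_i\|^2)$, so an outage occurs if and only if $\min_i\|\svv{h}_i\|^2<\gamma$, where $\gamma=2^{(C-\Delta C)/2}-1$.

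As in the reduction leading to (\ref{eq:P_WC_OUT2}), I would write $\svv{H}_c\svv{P}_c=\svv{U}_c\svv{\Sigma}_c\svv{Q}_c$ with $\svv{Q}_c=\svv{V}_c^H\svv{P}_c$ again Haar distributed; since $\svv{U}_c$ is unitary and $\svv{\Sigma}_c$ retains only $\sqrt{\rho_1},\sqrt{\rho_2}$, one has $\|\svv{h}_j\|^2=\rho_1|q_{1j}|^2+\rho_2|q_{2j}|^2$. The first column of a $2\times2$ Haar unitary is uniform on the sphere of $\mathbb{C}^2$, so $t=|q_{11}|^2$ is uniform on $[0,1]$ (equivalently, the eigenvalue of $\mathcal{J}(1,1,1)$ in (\ref{eq:jacobi})), while unitarity forces $|q_{21}|^2=|q_{12}|^2=1-t$ and $|q_{22}|^2=t$. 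Hence, using (\ref{eq:C_rho1_rho2}),
\begin{align}
\|\svv{h}_1\|^2=\rho_2+(\rho_1-\rho_2)t,\qquad\|\svv{h}_2\|^2=\rho_1-(\rho_1-\rho_2)t.\nonumber
\end{align}
Only two singular values enter, so the result is independent of $N_r$ for $N_r\geq2$. Taking $\rho_1\geq\rho_2$ without loss of generality, $\|\svv{h}_1\|^2<\gamma$ iff $t<a$ and $\|\svv{h}_2\|^2<\gamma$ iff $t>1-a$, with $a=(\gamma-\rho_2)/(\rho_1-\rho_2)$. For $0\leq a\leq\tfrac12$ these two events are disjoint, so by the uniformity of $t$,
\begin{align}
\prob\left(R_{\rm ML}<C-\Delta C\right)=2a.\nonumber
\end{align}

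It remains to maximize $a$ over $\mathbb{D}(C)$. Putting $s_i=1+\rho_i$, so that $s_1s_2=2^C$ by (\ref{eq:C_rho1_rho2}) and $g=1+\gamma=2^{(C-\Delta C)/2}$, we have $a=s_2(g-s_2)/(2^C-s_2^2)$ to be maximized over $s_2\in[1,\sqrt{2^C}]$. Differentiation yields the stationarity condition $g(2^C+s_2^2)=2\cdot2^C s_2$, which on resubstitution collapses $a$ to $s_2^2/(2^C+s_2^2)$ and, combined with $g^2=2^C2^{-\Delta C}$, gives the clean identity
\begin{align}
4a(1-a)=2^{-\Delta C}.\nonumber
\end{align}
Solving this quadratic and keeping the feasible root in $(0,\tfrac12)$ (the root $>\tfrac12$ corresponds to $s_2>\sqrt{2^C}$, i.e.\ $\rho_2>\rho_1$) gives $a=\tfrac12\bigl(1-\sqrt{1-2^{-\Delta C}}\bigr)$, whence $P^{\rm WC}_{\rm out,ML}(C,\Delta C)=2a=1-\sqrt{1-2^{-\Delta C}}$.

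The crux is this last step: identifying the worst-case channel inside the compound class and extracting the closed form. The decisive simplifications are the collapse of $a$ to $s_2^2/(2^C+s_2^2)$ at stationarity and the resulting identity $4a(1-a)=2^{-\Delta C}$, from which everything follows. One must still verify that the interior critical point is the global maximizer over $[1,\sqrt{2^C}]$ rather than the boundary (rank-one) endpoint, and that it is feasible, which holds in the regime of interest $g\geq 1+\sqrt{1-2^{-\Delta C}}$; one also checks $a<\tfrac12$ there, so that the disjointness used above is indeed valid.
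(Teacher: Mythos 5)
Your proof is correct and follows essentially the same route as the paper's: reduce $R_{\rm ML}$ to the two singleton terms, use the uniformity of $|q_{11}|^2$ for a $2\times2$ CUE matrix to get the per-column outage probability $a$, double it via disjointness of the two events, and maximize over $\rho_2$ subject to $(1+\rho_1)(1+\rho_2)=2^C$. Your observation that disjointness is simply the geometric fact that $\{t<a\}$ and $\{t>1-a\}$ cannot overlap when $a\le\tfrac12$, together with the identity $4a(1-a)=2^{-\Delta C}$ at the stationary point, is a cleaner packaging of what the paper does with an explicit $\rho_2^*$ and a separate appendix, but it is the same argument.
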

\begin{proof}
The capacity (\ref{eq:R_ML_spaceOnly}) of the $N_r \times 2$  MIMO MAC channel with equal user rates is given by
\begin{align}
%  R_{\rm ML,ST}&=\frac{1}{T}\min_{S\subseteq\{1, 2,\ldots,\Nt T\}}\frac{\Nt T}{|S|}\log\det\left(\svv{I}_S+\svv{H}_S\svv{H}_S^H\right)\\
 R_{\rm ML}(\svv{H}_c\svv{P}_c)&=\min_k \min_{S\in \mathcal{S}_k}\frac{2}{k} \log\det\left(\svv{I}_{N_r}+\svv{H}_{S}\svv{H}_{S}^H\right) \nonumber \\
 &\triangleq \min_k\min_{S\in \mathcal{S}_k} C(S)
 \label{eq:joint,st}
\end{align}
where $\mathcal{S}_k$ is the set of all the subsets of cardinality $k$ from $\{1,2\}$. Hence $\svv{H}_{S}$ is a submatrix of $\svv{H}_c\svv{P}_{c}$ formed by taking $k$ columns ($k$ equals $1$ or $2$). Since we assume that $\svv{P}_c$ is drawn from the CUE, it follows that $\svv{P}_c$ is equal in distribution to $\svv{V}_c^H\svv{P}_c$. Hence, taking $k$ columns from $\svv{H}_c\svv{P}_c$ is equivalent to multiplying $\svv{H}_c$ with $k$ columns of $\svv{P}_c$. Therefore (\ref{eq:joint,st}) can be written as
\begin{align}
    R_{\rm ML}(\svv{H}_c\svv{P}_c)&=
    \min\left\{C(\{1\}),C(\{2\}),C(\{1,2\})\right\}. %\nonumber \\
%    &\leq\min\left\{2R_{\rm ML}(\{1\}),R_{\rm ML}(\{1,2\})\right\}.
    \label{eq:Rj_k123}
\end{align}
When $k=2$, we have $C\left(\{1,2\}\right)=C_{}$.
% \begin{align}
%     % R_{\rm ML}(\{1,2\})&=\log\det\left(\svv{I}+(\svv{H}_c\svv{P}_c)(\svv{H}_c\svv{P}_c)^H\right) \nonumber \\
%     % &=\log\det\left(\svv{I}+\svv{H}_c\svv{P}_c\svv{P}_c^H\svv{H}_c^H\right) \nonumber \\
%     % &= C_{\rm WI}.
%     R_{\rm ML}(\{1,2\})=C_{}.
%     \label{eq:Rj_k2}
% \end{align}
Plugging this into (\ref{eq:Rj_k123}), we get
\begin{align}
R_{\rm ML}(\svv{H}_c\svv{P}_c)=\min\left\{C(\{1\}),C(\{2\}),C_{}\right\}.
\label{eq:Rj1111}
\end{align}

We now turn to study  $C(\{1\})$.
%$k=1$, we have (arbitrarily) chosen to take the first or column %of $\svv{P}_c$ in  \label{eq:Rj_k123}.
Note that
\begin{align}
\log\det\left(\svv{I}_{N_r}+\svv{H}_{S}\svv{H}_{S}^H\right)=\log\left(1+\svv{H}_{S}^H\svv{H}_{S}\right),
\end{align}
%W.L.O.G. we analyze taking the first column. %Denoting with $\left[~\right]_{s}$ a specific selection of $k$ columns we have
so that
\begin{align}
    C(\{1\})&=2\log\left(1+\begin{bmatrix}\svv{P}_{1,1}\\\svv{P}_{1,2}\end{bmatrix}^H\begin{bmatrix}
 {\rho_1} & \svv{0} \\ \svv{0} & {\rho_2}
 \end{bmatrix}\begin{bmatrix}\svv{P}_{1,1}\\\svv{P}_{1,2}\end{bmatrix}\right) \nonumber \\
 &=2\log\left(1+\rho_1\svv{P}_{1,1}^H\svv{P}_{1,1}+\rho_2\svv{P}_{2,1}^H\svv{P}_{2,1}\right).
\end{align}
% \begin{align}
% &\log\det\left(\svv{I}_k+\svv{E}_{s}\svv{E}_{s}^H\right)\nonumber \\
% &=\log\det\left(\svv{I}_k+\left[\svv{P}_{c}\right]_s^H\begin{bmatrix}
% {\rho_1} & \svv{0} \\ \svv{0} & {\rho_2}
% \end{bmatrix}\left[\svv{P}_{c}\right]_s\right) \nonumber \\ &=\log\det\left(\svv{I}_k+\rho_1\svv{P}_{1,1}^H\svv{P}_{1,1}+\rho_2\svv{P}_{2,1}^H\svv{P}_{2,1}\right).
% \end{align}
Also, since $\svv{P}_{1,1}$ and $\svv{P}_{2,1}$ form a vector in a unitary matrix,
\begin{align}
\svv{P}_{1,1}^H\svv{P}_{1,1}+\svv{P}_{2,1}^H\svv{P}_{2,1}=1,
\label{eq:24}
\end{align}
and hence
\begin{align}
R\left(\{1\}\right)&=2\log\left(1+\rho_1|\svv{P}_{1,1}|^2+\rho_2(1-|\svv{P}_{1,1}|^2\right) \nonumber \\
&=2\log\left(1+\rho_2+|\svv{P}_{1,1}|^2(\rho_1-\rho_2)\right).
\end{align}

Without loss of generality we assume that $\rho_2\leq\rho_1$.  Therefore,
\begin{align}
&\prob\left(C(\{1\})<R\right)=\nonumber \\
&\prob\left(2\log\left(1+\rho_2+|\svv{P}_{1,1}|^2(\rho_1-\rho_2)\right)<R\right)= \nonumber \\
&\prob\left(|\svv{P}_{1,1}|^2<\frac{2^{R/2}-1-\rho_2}{\rho_1-\rho_2}\right),
\end{align}
where $0\leq\rho_2\leq 2^{C/2}-1$.

The probability density function of the squared
magnitude of any entry of an $M\times M$ matrix drawn from the circular unitary ensemble is \cite{NarulaTrottWornell:1999}:\footnote{It is readily seen that this distribution is a special case of the Jacobi distribution.}
\begin{align}
    f_{|\svv{P}_{1,1}|^2}(\mu)
    \begin{cases}
    (M-1)(1-\mu)^{M-2} & 0\leq\mu\leq 1 \\
    0 & \text{otherwise}
    \end{cases},
\end{align}
where the expression holds for $M\geq2$. In our case, $M=2$, and thus
%\begin{align}
$|\svv{P}_{1,1}|^2\sim U(0,1)$.
%\end{align}
Hence,
\begin{align}
\prob\left(C(\{1\})<R\right)&=\max\left(\frac{2^{R/2}-1-\rho_2}{\rho_1-\rho_2},0\right) \nonumber \\
&\geq \frac{2^{R/2}-1-\rho_2}{\rho_1-\rho_2}.
\end{align}
As from (\ref{eq:C_rho1_rho2}) we have $\rho_1=\frac{2^C_{}}{1+\rho_2}-1$,
% \begin{align}
%     \rho_1=\frac{2^C_{\rm WI}}{1+\rho_2}-1,
% \end{align}
it follows that
\begin{align}
\prob\left(C(\{1\})<R)\right)=\frac{2^{R/2}-1-\rho_2}{\frac{2^C_{}}{1+\rho_2}-1-\rho_2}.
\label{eq:Rj_k1}
\end{align}
% We note that since $\svv{P}_c$ is a $2\times 2$ unitary matrix we have
% \begin{align}
% \svv{P}_{1,1}=\svv{P}_{2,2} \nonumber \\
% \svv{P}_{2,1}=\svv{P}_{2,1} \nonumber
% \end{align}
% Recalling (\ref{eq:Rj1111}) we need $\prob\left(2R_{\rm ML}(\{1\})<R)\right)$ hence we have
% \begin{align}
% \prob\left(2R_{\rm ML}(\{1\})<R)\right)&=\prob\left(R_{\rm ML}(\{1\})<R/2)\right) \nonumber \\
% &=\max\left(\frac{2^{R/2}-1-\rho_2}{\frac{2^C_{\rm WI}}{1+\rho_2}-1-\rho_2},0\right)
% \label{eq:Rj_k1}
% \end{align}
Now, by symmetry, it is clear that
\begin{align}
\prob\left(C(\{2\})<R)\right)=\prob\left(C(\{1\})<R)\right).
\label{eq:Rj_k1_b}
\end{align}
Furthermore, it is not difficult to show (a proof appears in Appendix~\ref{sec:appb})
%Appendix~B of \cite{FullPaper:Domanovitz2017})
that the events $\mbox{\{C(\{1\})<R\}}$
and $\mbox{\{C(\{2\})<R\}}$ are disjoint. Due to this and  by (\ref{eq:Rj_k1}) and (\ref{eq:Rj1111}), it follows that\footnote{For the case of $N_r=1$, the exact outage probability is given by (\ref{eq:new}),  setting $\rho_2=0$. \label{footnote5}}
%for $0\leq R \leq C_{}$, we have
\begin{align}
    \prob\bigl(R_{\rm ML}(\svv{H}_c\svv{P}_c)<R\bigr)&=2 \cdot
    \prob\left(C(\{1\})<R\right) \nonumber \\
    &= 2 \cdot \frac{2^{R/2}-1-\rho_2}{\frac{2^C_{}}{1+\rho_2}-1-\rho_2},
    \label{eq:new}
\end{align}
which implies that
\begin{align}
    P^{\rm WC}_{\rm out,ML}(C_{},R) = \max_{0\leq\rho_2\leq2^{C/2}-1} 2\cdot\frac{2^{R/2}-1-\rho_2}{\frac{2^C_{}}{1+\rho_2}-1-\rho_2}.
    %-1-\rho_2},
\end{align}

It is readily verified that the derivative of the expression that is maximized with respect to $\rho_2$ is zero
for (and only for)
\[
\rho^*_2=2^{-R/2-1}\left(2^{C+1}-2^{R/2+1}-2\sqrt{2^{2C}-2^{C+R}}\right),\]
and moreover, that the second derivative at this point is negative, and  hence this is a global maximum.
Finally, by plugging $\rho_2=\rho^*_2$ (and noting that $R=C-\Delta C$), we obtain
\begin{align}
    P^{\rm WC}_{\rm out,ML}(C_{},\Delta C) = 1-\sqrt{1-2^{-\Delta C}}.
\end{align}
\end{proof}

\subsubsection{Comparison of Bounds and Empirical Results}
\label{sec:comparison}
Figure~\ref{fig:xxx} depicts the lower and upper bounds as well as results of an empirical simulation of the scheme.\footnote{Rather than plotting the WC outage probability, we plot its complement.}
%, the guaranteed achievable probability (which equals %$1-P_{WC}(C,\Delta C)$).
We observe that for \mbox{$N_r \times 2$} channels, the empirical performance of randomly precoded IF-SIC is very close to the upper (ML) bound. This suggests that one can expect that the ML bound may serve as a useful design tool for more general cases
($N_t>2$).
\begin{figure}
\begin{center}
\includegraphics[width=0.9\columnwidth]{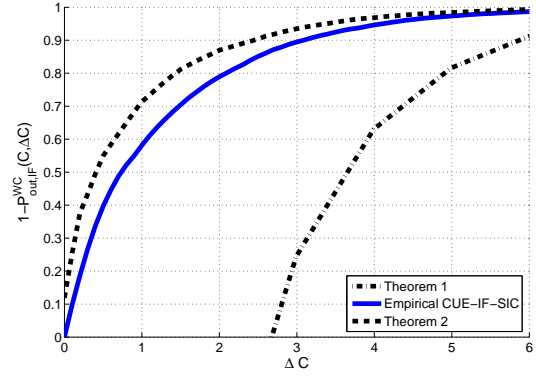}
\end{center}
\caption{Theorem 1 (upper bound on outage probability/lower bound on achievable rate) and Theorem 2 (lower bound on WC outage probability/upper bound on achievable rate) for $N_r\times 2$ MIMO channels ($N_r\geq2$) with mutual information $C_{}=14$.}
\label{fig:xxx}
\end{figure}

\subsection{Space-Time Precoding}

Hitherto the role of random precoding
was limited to facilitating performance evaluation.
Namely, applying CUE precoding results
in performance being dictated solely by the singular values of the channel, so that one can then consider the worst case performance only with respect to the latter.

In contrast, applying random precoding over time as well as space has operational significance, allowing to improve the guaranteed performance as we quantify next.

\subsubsection{Background}

%Combining space-time precoding and integer forcing was suggested in \cite{Domanovitz:2014}, as we next briefly recall.

A block of $T$ channel uses is processed jointly so that the $N_r \times N_t$ physical MIMO channel (\ref{eq:channel_model}) is transformed into a ``time-extended" $N_r T \times  N_t T$ MIMO channel.
A unitary precoding matrix $\svv{P}_{st,c} \in \mathbb{C}^{\Nt T \times \Nt T}$ that can be either deterministic or random is then
applied to the time-extended channel.
At the receiver, IF equalization is
employed.

%This vector is multiplied by $\svv{P}_{st,c}$ to form the vector $\boldsymbol{x}_c = \svv{P}_{st,c}\bar{\boldsymbol{x}}_c \in \mathbb{C}^{\Nt T \times 1}$ which is transmitted over the channel (\ref{eq:channel_model}) over $T$ consecutive channel uses. Let
%\begin{align}
% \mathcal{H}_c=\svv{I}_{T\times T} \otimes \svv{H}_c=
% \begin{bmatrix}
% \svv{H}_c & 0 & \cdots & 0 \\
% 0 & \svv{H}_c \cdots & 0 \\
% \vdots & \vdots & \ddots & 0 \\
% 0 & 0 & \cdots & \svv{H}_c
% \end{bmatrix}
% \label{eq:spaceTimeChannelDef}
%\end{align}
%where $\otimes$ denotes the Kronecker product. The output of the aggregate channel is obtained by stacking $T$ consecutive outputs of the channel (\ref{eq:channel_model}) and is given by
%\begin{align}
%\bar{\boldsymbol{y}}_c=\mathcal{H}_c\svv{P}_{st,c}\bar{\bo%ldsymbol{x}}_c+\bar{\boldsymbol{z}}_c,
%\end{align}
%where $\bar{\boldsymbol{z}}_c$ is an additive noise vector %of i.i.d. unit-variance circularly symmetric complex %Gaussian random variables.
 Hence, the equivalent channel takes the form
\begin{align}
 \mathcal{H}_c&=\svv{I}_{T\times T} \otimes \svv{H}_c,
%  \nonumber \\
%  &=
%  \begin{bmatrix}
%  \svv{H}_c\svv{P}_c & 0 & \cdots & 0 \\
%  0 & \svv{H}_c\svv{P}_c & \cdots & 0 \\
%  \vdots & \vdots & \ddots & 0 \\
%  0 & 0 & \cdots & \svv{H}_c\svv{P}_c
%  \end{bmatrix}.
 \end{align}
 where $\otimes$ is the Kronecker product. Let $\bar{\boldsymbol{x}}_c\in \mathbb{C}^{\Nt T \times 1}$ be the input vector to the time-extended channel. It follows that the output of the time-extended channel is given by
 \begin{align}
\bar{\boldsymbol{y}}_c^P=\mathcal{H}_c\svv{P}_{st,c}\bar{\boldsymbol{x}}_c+\bar{\boldsymbol{z}}_c,
\end{align}
where $\bar{\boldsymbol{z}}_c$ is i.i.d.  unit-variance circularly symmetric complex Gaussian noise.
As we assume that the precoding matrix is unitary (for both deterministic or random cases), the WI mutual information of this channel (normalized per channel use) remains unchanged, i.e.,
\begin{align}
    \frac{1}{T}\log\det\left(\svv{I}+(\mathcal{H}_c\svv{P}_{st,c})(\mathcal{H}_c\svv{P}_{st,c})^H\right)= C_{}.
\end{align}
When using a given space-time precoding ensemble, the WC scheme outage is defined as
\begin{align}
     %\prob_{\rm OPT, OL, out}(\SNR)\triangleq\prob\left[\log_2\left|\svv{I}_{\Nt\times \Nt}+\frac{\SNR}{\Nt}\svv{H}\svv{H}^H\right|<R(\SNR)\right].
     %\prob_{\rm Outage}\leq\prob_{\forall\svv{H}\in\mathbb{H},\svv{P}}\left[R_{\svv{P}}<R\right].
     & P^{\rm WC,scheme}_{\rm out}(C_{},\Delta C)=  \nonumber \\
    & \sup_{\svv{H}_c\in\mathbb{H}(C_{})}\prob\left(\frac{1}{T}R_{\rm scheme}(\mathcal{H}_c\cdot\svv{P}_{st,c})<C-\Delta C\right).
     \label{eq:P_ST_WC_OUT}
\end{align}
\subsubsection{Upper Bound}
\label{sec:st_upper}
For $N_t=2$, space-time CUE precoding results in a $N_r T\times 2T$ MIMO channel. An upper bound on the WC outage probability can be obtained from Theorem 1 in \cite{DomanovitzE16}, by substituting $\Nt=2 T$.
%\begin{theorem}
%{\cite{DomanovitzE16}}
%For any $N_r\times\Nt$ complex channel with WI mutual %information $C$, and for $\svv{V}_c$ drawn from the CUE (which %induces a real-valued precoding matrix $\svv{V}$), we have
%\begin{align}
%%\prob\left(R_{\rm IF}(\svv{H},\svv{V})<C-\Delta C\right)\leq %c(\Nt)2^{-\Delta C},
%P^{\rm WC}_{\rm out,IF}\left(C,\Delta C\right)\leq %c(\Nt)2^{-\Delta C},
%\label{eq:thm1}
%\end{align}
%where
%\begin{align}
%    c(\Nt)=\left(1+\sqrt{2\Nt}\right)^{2\Nt}\Nt %\alNt^{\Nt}\frac{\pi^{\Nt}}{\Gamma(\Nt+1)}
%\end{align}
%and
%\begin{align}
%\alNt=\frac{2\Nt+3}{4}\left(\frac{2}{\pi}\Gamma\left(2+\Nt\right%)^{1/\Nt}\right)^2.
%\end{align}
%Thus, $c(\Nt)$ is a constant that depends only on $\Nt$.
%\label{thm:thm1}
%\end{theorem}
\subsubsection{Lower Bound}
\label{sec:st_lower}
Define

\begin{align}
\mathcal{B}_1(T,k,R,\rho_1,\rho_2)&=\left\{\underline{\lambda}:\prod_{i=1}^{k}\left(1+\rho_1\lambda_i+\rho_2(1-\lambda_i)\right)\leq2^{R\frac{k}{2}}\right\} \nonumber \\
\mathcal{B}_2(T,k,\tilde{R},\rho_1,\rho_2)&=\left\{\underline{\lambda}:\prod_{i=1}^{2T-k}\left(1+\rho_1\lambda_i+\rho_2(1-\lambda_i)\right)\leq2^{\tilde{R}}\right\} \nonumber \\
\kappa_{m_1,m_2,n}&=\prod_{j=1}^{n}\frac{\Gamma(m_1-n+j)\Gamma(m_2-n+j)\Gamma(1+j)}{\Gamma(2)\Gamma(m_1+m_2-n+j)}.
\end{align}

where $\tilde{R}=\frac{k}{2}\max(R-(k-T),0)$.

\begin{theorem}
For an $N_r\times 2$ compound channel with WI-MI equal $C_{}$, and CUE precoding over $T$ time extensions, we have
\begin{align}
& P^{\rm WC}_{\rm out}(C_{},R)\geq \max_{0\leq \rho_2\leq 2^{C_{}}/2}\max_k P_{\rm out}
% \frac{1}{T} \max_{\rho_2} \frac{2T}{k} P_{\rm out}(k,T,R,\rho_1,\rho_2)
\end{align}
where $P_{\rm out}=P_{\rm out}(k,T,R,\rho_1,\rho_2)$ and
\begin{itemize}\addtolength{\itemsep}{0.5\baselineskip}
\item
For $1\leq k \leq T$:
%{\tiny
\begin{align}
P_{\rm out}=\kappa_1\int\displaylimits_{\mathcal{B}_1}\prod_{i=1}^{k}\lambda_i^{T-k}(1-\lambda_i)^{T-k}\prod_{i<j}(\lambda_j-\lambda_i)^2d\underline{\lambda}
\end{align}
%}%
\normalsize
\item
For $T+1\leq k \leq 2T$:
%\small
\begin{align}
%& P_{\rm out}(k,T,R,\rho_1,\rho_2)= \nonumber \\
 P_{\rm out}=\kappa_2\int\displaylimits_{\mathcal{B}_2}\prod_{i=1}^{2T-k}\lambda_i^{k-T}(1-\lambda_i)^{k-T}\prod_{i<j}(\lambda_j-\lambda_i)^2d\underline{\lambda}.
\end{align}
\normalsize
\end{itemize}
and where \mbox{$\kappa_1=\kappa_{T,T,k}^{-1}$} and \mbox{$\kappa_2=\kappa_{T,T,2T-k}^{-1}$}.
\label{thm:thm3}
\end{theorem}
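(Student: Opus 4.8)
The plan is to upper-bound the rate $R_{\rm scheme}$ achievable by space-time IF by the capacity $R_{\rm ML}$ of the associated time-extended MIMO-MAC with equal user rates, exactly as in the space-only case: since $R_{\rm scheme}\le R_{\rm ML}$ pointwise, the event $\{\tfrac1T R_{\rm scheme}<R\}$ contains $\{\tfrac1T R_{\rm ML}<R\}$, so $\prob(\tfrac1T R_{\rm ML}<R)$ is a valid lower bound on the scheme outage. Writing $R_{\rm ML}=\min_k\min_{S:|S|=k}\tfrac{2T}{k}\log\det(\svv{I}+\svv{H}_S\svv{H}_S^H)$ over the $2T$ virtual transmit antennas, I would discard the outer minimum and keep a single representative subset of each size $k$: because the CUE is invariant under column permutations, every size-$k$ subset induces the same marginal law, and the per-subset outage event $\{\tfrac{2T}{k}\log\det(\svv{I}+\svv{H}_S\svv{H}_S^H)<TR\}$ is equivalent to $\det(\svv{I}_k+\svv{H}_S^H\svv{H}_S)<2^{kR/2}$. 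Retaining only one value of $k$ (and later maximizing over $k$) is what turns the identity of Theorem~\ref{thm:thm1} into the inequality claimed here, since the union over the remaining subsets is dropped.

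Second, I would reduce the determinant to a Jacobi integral. By the CUE invariance already exploited in (\ref{eq:P_WC_OUT2}), the left/right singular vectors of $\svv{H}_c$ may be absorbed, so $\svv{H}_S^H\svv{H}_S$ is equal in law to $\svv{Q}_S^H\svv{\Lambda}\svv{Q}_S$, where $\svv{Q}_S$ is a $2T\times k$ column-submatrix of a $2T\times 2T$ CUE matrix and $\svv{\Lambda}=\svv{I}_T\otimes\mathrm{diag}(\rho_1,\rho_2)$ carries $T$ copies each of $\rho_1,\rho_2$. Writing $\svv{\Lambda}=\rho_2\svv{I}+(\rho_1-\rho_2)\svv{\Pi}_1$ with $\svv{\Pi}_1$ the rank-$T$ projection onto the $\rho_1$-coordinates, the eigenvalues of $\svv{Q}_S^H\svv{\Lambda}\svv{Q}_S$ are $\rho_1\lambda_i+\rho_2(1-\lambda_i)$, where the $\lambda_i$ are the squared singular values of the $T\times k$ submatrix $\svv{\Pi}_1\svv{Q}_S$. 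Invoking the Jacobi result recalled in the preliminaries (with $\Nt$ replaced by $2T$), for $1\le k\le T$ these $\lambda_i$ are distributed as the eigenvalues of $\mathcal{J}(T,T,k)$; substituting the density (\ref{eq:jacobi}) with $m_1=m_2=T$, $n=k$, the per-subset outage probability becomes the integral over $\mathcal{B}_1$ with weight $\prod_i\lambda_i^{T-k}(1-\lambda_i)^{T-k}\prod_{i<j}(\lambda_j-\lambda_i)^2$ and prefactor $\kappa_1=\kappa_{T,T,k}^{-1}$, which is the first displayed expression.

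Third, the $T+1\le k\le 2T$ case, which I expect to be the crux. Here $\svv{Q}_S$ spans a $k$-dimensional subspace that generically meets $\mathrm{range}(\svv{\Pi}_1)$ and $\mathrm{range}(\svv{\Pi}_2)$ in subspaces of dimension $k-T$ each, so (with probability one) exactly $k-T$ of the $\lambda_i$ equal $1$, exactly $k-T$ equal $0$, and only the remaining $2T-k$ are interior; by Lemma~1 of \cite{dar2013jacobi} these interior values are distributed as the eigenvalues of $\mathcal{J}(T,T,2T-k)$. The saturated coordinates contribute the deterministic factor $(1+\rho_1)^{k-T}(1+\rho_2)^{k-T}=2^{C(k-T)}$ to $\det(\svv{I}_k+\svv{H}_S^H\svv{H}_S)$, so the outage condition $\det<2^{kR/2}$ peels off into a condition on the product over the $2T-k$ free eigenvalues alone, namely $\prod_{i=1}^{2T-k}(1+\rho_1\lambda_i+\rho_2(1-\lambda_i))\le 2^{\tilde R}$, i.e.\ the region $\mathcal{B}_2$; integrating the $\mathcal{J}(T,T,2T-k)$ density then produces the weight $\prod_i\lambda_i^{k-T}(1-\lambda_i)^{k-T}\prod_{i<j}(\lambda_j-\lambda_i)^2$ and prefactor $\kappa_2=\kappa_{T,T,2T-k}^{-1}$. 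The delicate bookkeeping is exactly this reduction of the threshold to $\tilde R$ (including the flooring at zero, which guards against the saturated factor already exhausting the budget) together with the almost-sure verification of the saturation multiplicities.

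Finally, assembling the pieces: each fixed $k$ and each admissible $\rho_2$ (with $\rho_1$ pinned by $C$ via (\ref{eq:C_rho1_rho2})) yields a valid lower bound, so maximizing over $k$ and over $\rho_2$---the latter realizing the worst-case channel in the supremum defining $P^{\rm WC}_{\rm out}$---produces the stated bound. The main obstacle is the $k>T$ analysis: correctly isolating the deterministic saturated factor, routing the interior singular values through Lemma~1 of \cite{dar2013jacobi} to the $\mathcal{J}(T,T,2T-k)$ law, and thereby fixing the reduced threshold $\tilde R$.
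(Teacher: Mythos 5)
Your proposal is correct and follows essentially the same route as the paper's Appendix~\ref{sec:appa}: bound the IF rate by the equal-rate MIMO-MAC (ML) capacity, keep a single representative column subset of each cardinality $k$ by CUE permutation invariance, reduce $\det(\svv{I}_k+\rho_1\svv{P}_1^H\svv{P}_1+\rho_2\svv{P}_2^H\svv{P}_2)$ to $\prod_i\bigl(1+\rho_1\lambda_i+\rho_2(1-\lambda_i)\bigr)$ with the $\lambda_i$ Jacobi-distributed, and maximize over $k$ and $\rho_2$ at the end. The one substantive difference is in the $T<k\le 2T$ case, where the paper invokes Theorem~3 of \cite{dar2013jacobi} as a black box while you derive the reduction directly from the saturation of $k-T$ eigenvalues at $0$ and $k-T$ at $1$; your explicit deterministic factor $\bigl((1+\rho_1)(1+\rho_2)\bigr)^{k-T}=2^{C(k-T)}$ yields the reduced threshold $\tilde R=\max\bigl(\tfrac{k}{2}R-C(k-T),0\bigr)$, which does not coincide with the paper's stated $\tilde R=\tfrac{k}{2}\max(R-(k-T),0)$ and appears to be the correct value for this two-level $(\rho_1,\rho_2)$ spectrum.
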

\begin{proof}
The proof depends on the eigenvalue distribution of submatrices of $\svv{P}_c$. As mentioned above, these eigenvalues follow the Jacobi distribution. The full description of the distribution and proof appears in Appendix~\ref{sec:appa}
%can be found in \cite{FullPaper:Domanovitz2017}.
\end{proof}

\subsubsection{Comparison of Bounds and Empirical Results}
We compare the obtained upper and lower bounds with the empirical performance results of CUE-precoded IF-SIC. In addition, for a  $T=2$
time-extended $N_r\times 2$ channel, it is natural to also compare  performance with that obtained by replacing CUE precoding with algebraic precoding. Specifically, we consider  Alamouti  and golden code
precoding.\footnote{When using  a fixed space-time precoding matrix,  we apply in addition CUE precoding to the physical channel.}

To that end, let us define the $\varepsilon$-outage capacity of a scheme $R_{\rm scheme}(\svv{P}_{st,c};\varepsilon)$ as the rate  for which
\begin{align}
     P^{\rm WC,scheme}_{\rm out}\left(C_{},R_{\rm scheme}(\svv{P}_{st,c};\varepsilon)\right)=\varepsilon.
\end{align}
Further, the guaranteed transmission efficiency of a scheme, at a given outage probability $\varepsilon$ and WI mutual information $C_{}$, is defined as
\begin{align}
    %  \eta_{\varepsilon}(C_{},\svv{P}_{st,c})=
    %  \frac{R_{\rm scheme}(\svv{P}_{st,c};\varepsilon)}{C_{}}.
    \eta_{\varepsilon}(C_{},\svv{P}_{st,c})=
     R_{\rm scheme}(\svv{P}_{st,c};\varepsilon) / C_{}.
\end{align}
Figure~\ref{fig:eff_1per_2x2} depicts the guaranteed efficiency at $1\%$ outage for several precoding options for an $N_r\times 2$ channel and $T=1,2$. We plot the empirical efficiency for both IF-SIC and ML receivers. It can be seen that for CUE precoding,
the performance of IF-SIC is very close to that of ML.

\begin{figure}
\begin{center}
\includegraphics[width=0.96\columnwidth]{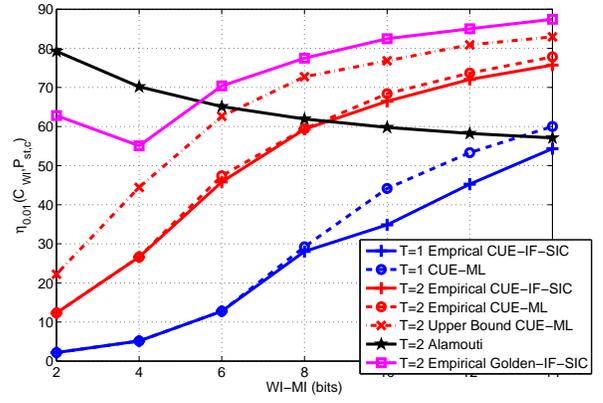}
\end{center}
\caption{Guaranteed efficiency at $1\%$ outage probability for the $N_r \times 2$ MIMO channel with various precoding and decoding options, for $T=1,2$.}
\label{fig:eff_1per_2x2}
\end{figure}

\begin{figure}
\begin{center}
\includegraphics[width=0.96\columnwidth]{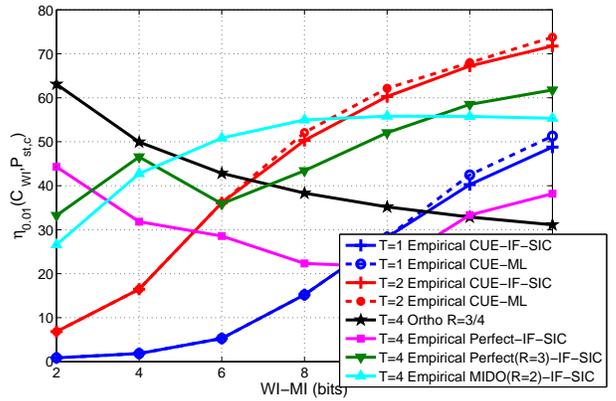}
\end{center}
\caption{Guaranteed efficiency at $1\%$ outage probability for the $N_r\times 4$ MIMO channel with various precoding and decoding options, for $T=1,2,4$.}
\label{fig:eff_1per_4x4}
\end{figure}

We also present empirical results for an $N_r\times 4$ channel.
Figure~\ref{fig:eff_1per_4x4} depicts the guaranteed efficiency at $1\%$ for several precoding and receiver topologies, where the algebraic codes considered
are orthogonal space-time block precoding (rate $3/4$),  the perfect code \cite{oggier2006perfect}, the latter punctured to rate $3$, and also the MIDO  (rate $2$) code \cite{OggierMIDO:2010}.

\section{Application: Closed-Form Bounds for the Symmetric-Rate Capacity of the Rayleigh Fading Multiple-Access Channel}
The lower bound derived in the previous section can be easily shown to cover also the case of a $2\times \Nt$ channel where $\Nt\geq2$. In this section we  adapt the bound for the case of a $1\times \Nt$ system where in this case we are interested in a MAC scenario, that is the encoders correspond to different (and distributed) users. More specifically, we analyze the ML performance of a Rayleigh fading MAC where all terminals are equipped with a single antenna and where we consider a simple transmission protocol
as described below.

%, the receiver is also equipped with a single antenna, and a simple %transmission scheme is required such that all users transmit at the same %rate.

The channel is described by
\begin{align}
    y=\sum_{i=1}^{\Nt}h_i x_i+n
\end{align}
where ${h}_{i}\sim \sqrt{\SNR} \cdot  \mathcal{CN}(0,1)$  and $n\sim\mathcal{CN}(0,1)$, and where there is no statistical dependence between any of these random variables. Without loss of generality we assume throughout the analysis to follow that $\SNR=1$, i.e., we absorb the $\SNR$ into the channel.

The capacity region of the channel is given by \cite{CoverBook} all rate vectors
$(R_1, \ldots, R_{N_t})$ satisfying \begin{align}
\sum_{i\in S} R_i < \log\left(1+\sum_{i\in S}|{ h}_i|^2\right),
\label{eq:capacity_region_MAC}
\end{align}
for all ${S\subseteq\{1, 2,\ldots,\Nt\}}$.
We denote the sum capacity by
\begin{align}
C=\log\left(1+\sum_{i=1}^{N_t}|{ h}_i|^2\right).
\label{eq:sum_capacity_MAC}
\end{align}

If we impose the constraint that all users transmit at the same rate, then the maximal achievable symmetric-rate is given by substituting $R_i=C_{\rm sym}/N_t$ in \eqref{eq:capacity_region_MAC}, from which it follows that the symmetric-rate capacity is dictated by the bottleneck:
\begin{align}
C_{\rm sym}=\min_{S\subseteq\{1, 2,\ldots,\Nt\}}\frac{\Nt}{|S|}\log\left(1+\sum_{i\in S}|{ h}_i|^2\right).
\label{eq:sym_capacity_MAC}
\end{align}
Note that \eqref{eq:sym_capacity_MAC} is a special case of (\ref{eq:R_ML_spaceOnly}).

%In a MAC channel, the  encoders are distributed and hence applying %precoding over the antennas is precluded.
While in general applying a CUE precoding transformation ${\svv{P}}_c$ implies joint processing at the encoders, which is precluded in a MAC setting, in an i.i.d. Rayleigh fading environment, this random transformation is actually performed by nature.\footnote{This follows since the left and right singular vector matrices of the an i.i.d. Gaussian matrix $\svv{H}_c$ are equal to the eigenvector matrices of the  Wishart ensembles $\svv{H}_c\svv{H}_c^{H}$ and $\svv{H}^H_c\svv{H}_c$, respectively. The latter are known to be CUE (Haar) distributed. See, e.g., Chapter~4.6 in \cite{edelman2005random}.} Hence
the results developed in the previous sections readily apply to this scenario.

We next analyze the conditional ``cumulative distribution function" \begin{align}
    \prob(C_{\rm sym}<R| C)
    \label{eq:cum}
\end{align}
for i.i.d. Rayleigh fading.\footnote{We use quotation marks since we impose strict inequality in $C_{\rm sym}<R$.} The latter quantity gives a full statistical
characterization for the performance of a transmission protocol
where all users transmit at a rate just below the equal-rate capacity (per user) of the channel, where the underlying assumption is that this rate is dictated to the users by the base station.

Another interpretation for \eqref{eq:cum} is as a conditional outage probability in an open-loop scenario. That is, in a scenario where
all users (when they are active) transmit at a common target rate $R_{\rm tar}$, for a given  number of active users is $N_t$, then the outage probability is given by $E[ \prob(C_{\rm sym}<N_t \cdot R_{\rm tar}| C)]$ where the expectation is over $C$ and is computed w.r.t. an i.i.d. Rayleigh distribution.

We will obtain tight bounds on the distribution of the rate attained by such a transmission scheme. In particular, these bounds characterize the probability that the dominant face of the MAC capacity
region contains an equal-rate point, i.e., that the scheme strictly attains the sum-capacity of the
channel. The analysis provides a non-asymptotic counterpart to the diversity-multiplexing tradeoff of the MAC channel and can
also serve to obtain bounds on the ergodic capacity of the described protocol.
%Furthermore, a practical schemes based on integer-forcing and space-time
%precoding is discussed. It is shown that for the 2-user case, the outage %probability of this scheme
%behaves close to the optimal. This near-optimality confirms the %theoretic-based predictions.
% The SVD decomposition of the Rayleigh MAC is
% \begin{align}
%     %\svv{H}_c=\svv{U}_c\begin{bmatrix}\sqrt{\rho_1} & 0 \\ 0 & \sqrt{\rho_2} \\ 0 & 0 \\ \vdots & \vdots \\ 0 & 0 \end{bmatrix}\svv{V}_c^H\svv{P}_c.
%      \bf{h}=\svv{U}
%      \begin{bmatrix}
%         \sqrt{2^C-1} & 0 & \cdots & 0 \\
%         0 & 0 &  \cdots & 0 \\
%         \vdots & \vdots & \cdots & \vdots \\
%         0 & 0 &  \cdots & 0
%      \end{bmatrix}
%      \svv{V}_c
%     %\label{eq:SVDofHc}
% \end{align}
% where $\svv{U}$ are $\svv{V}$ can be viewed as were drawn from the CUE (Haar measure).
We start by analyzing the simplest case of a two-user MAC.
%$\Nt=2$.
\begin{theorem}
For a two-user i.i.d. Rayleigh fading MAC, we have
    \begin{align}
        \prob(C_{\rm sym}<R| C) = 2\cdot\frac{2^{R/2}-1}{2^{C}-1}
    \end{align}
    \label{thm:thm4}
\end{theorem}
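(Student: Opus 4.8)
The plan is to recognize that the two-user i.i.d.\ Rayleigh fading MAC is precisely the $N_r=1$, $\Nt=2$ specialization of the compound-channel ML analysis underlying Theorem~\ref{thm:thm1}. As already observed in the text, in an i.i.d.\ Rayleigh environment the random unitary transformation is performed by nature: writing the $1\times 2$ channel row $\svv{H}_c=[h_1\ h_2]$ via its SVD, the single nonzero squared singular value is $\rho_1=|h_1|^2+|h_2|^2$ while the second vanishes, so that $\rho_2=0$, and the right singular-vector matrix is CUE (Haar) distributed. Conditioning on the sum capacity $C$ therefore fixes $\rho_1=2^{C}-1$, and the entire machinery developed in the proof of Theorem~\ref{thm:thm1} applies verbatim with $\rho_2$ set to $0$.

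First I would specialize $C_{\rm sym}$ in \eqref{eq:sym_capacity_MAC} to two users, obtaining $C_{\rm sym}=\min\{2\log(1+|h_1|^2),\,2\log(1+|h_2|^2),\,C\}$, which is exactly the random variable of \eqref{eq:Rj1111} with $C(\{i\})=2\log(1+|h_i|^2)$ and $C(\{1,2\})=C$. The key distributional input is that, conditioned on $C$ (equivalently on $\rho_1$), the normalized power $|h_1|^2/\rho_1$ is uniform on $[0,1]$. This follows because $|h_1|^2$ and $|h_2|^2$ are i.i.d.\ exponential, so the fraction each takes of their sum is $\mathrm{Beta}(1,1)=U(0,1)$; this is the same statement as the law $|\svv{P}_{1,1}|^2\sim U(0,1)$ used in the proof of Theorem~\ref{thm:thm1}, and is precisely the point where the Rayleigh model is identified with CUE precoding of a rank-one channel.

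Given this, I would compute $\prob\bigl(C(\{1\})<R\mid C\bigr)=\prob\bigl(|h_1|^2<2^{R/2}-1\mid C\bigr)=(2^{R/2}-1)/(2^{C}-1)$ for $0\le R\le C$, and by symmetry the same value for $C(\{2\})$. As shown in Appendix~\ref{sec:appb}, the events $\{C(\{1\})<R\}$ and $\{C(\{2\})<R\}$ are disjoint; equivalently, $(2^{C/2}-1)^2\ge0$ gives $2^{C}-1\ge 2(2^{R/2}-1)$ for $R\le C$, so the two low-power events cannot co-occur under the sum constraint, while the sum-rate constraint $C(\{1,2\})=C>R$ is inactive. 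Summing the two disjoint contributions yields $\prob(C_{\rm sym}<R\mid C)=2\,(2^{R/2}-1)/(2^{C}-1)$, which is exactly \eqref{eq:new} evaluated at $\rho_2=0$, as anticipated in the footnote following that equation.

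The only genuinely non-routine step is justifying the conditional-uniform law for the power split and its identification with the CUE statistic $|\svv{P}_{1,1}|^2$ of the precoded setting; once this equivalence is in place, the disjointness and the arithmetic are immediate, and the claimed expression drops out of Theorem~\ref{thm:thm1} by setting $\rho_2=0$.
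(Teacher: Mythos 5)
Your proof is correct and follows essentially the same route as the paper's: decompose $C_{\rm sym}=\min\{C(\{1\}),C(\{2\}),C\}$, observe that conditioned on $C$ the normalized power $|h_1|^2/(2^C-1)$ is uniform on $[0,1]$ (the paper phrases this via $|\svv{U}_{1,1}|^2$ for a CUE matrix, you via the Beta$(1,1)$ property of i.i.d.\ exponentials---the same fact), then use symmetry and disjointness of the two single-user outage events. Your direct disjointness argument from the power-sum constraint $|h_1|^2+|h_2|^2=2^C-1$ together with $2(2^{R/2}-1)\le 2^C-1$ is a slightly cleaner equivalent of the quadratic inequality in Appendix~\ref{sec:appb}, but it is not a genuinely different approach.
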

\begin{proof}
Given $C$, $\svv{h} \triangleq (h_1,h_2)$ is uniformly distributed over a two-dimensional  complex sphere of radius
$\sqrt{2^ {C}-1}$. Hence, $\svv{h}/ \|\svv{h} \|$ can be viewed as the first row of a unitary matrix $\svv{U}$ drawn from the CUE.

%We follow the footsteps of the proof of Theorem~\ref{thm:thm1} while noting that in the case of $1\times 2 $ MAC, we have $\rho_1=2^C-1$ and $\rho_2=0$.

By \eqref{eq:sym_capacity_MAC} and using the notation of
\eqref{eq:joint,st}, we obtain (cf. \eqref{eq:Rj_k123})
\begin{align}
    C_{\rm sym}=\min\left\{C(\{1\}),C(\{2\}),C_{}\right\}.
\end{align}
We start by analyzing $C(\{1\})$ which is given by
\begin{align}
C(\{1\})&=2\log\left(1+|h_1|^2\right) \nonumber \\
& = 2\log\left(1+|\svv{U}_{1,1}|^2(2^{C}-1)\right).
\end{align}
% \begin{align}
%      C(\{1\})&=2\log\left(1+\begin{bmatrix}\svv{U}_{1,1}\\\svv{U}_{1,2}\end{bmatrix}^H\begin{bmatrix}
%  2^C-1 & {0} \\ {0} & {0}
%  \end{bmatrix}\begin{bmatrix}\svv{U}_{1,1}\\\svv{U}_{1,2}\end{bmatrix}\right) \nonumber \\
%  &=2\log\left(1+|\svv{U}_{1,1}|^2(2^C-1)\right).
% \end{align}
It follows that
\begin{align}
    \prob(C(\{1\})<R|C)=\prob\left(|\svv{U}_{1,1}|^2<\frac{2^{R/2}-1}{2^C-1}\right).
\end{align}
Since (see, e.g., \cite{NarulaTrottWornell:1999}) for a  $2\times2$ CUE matrix, we have $|\svv{U}_{1,1}|^2\sim \rm{Unif}([0,1])$, we obtain
\begin{align}
    \prob(C(\{1\})<R|C)=\frac{2^{R/2}-1}{2^C-1}.
\end{align}
We note that, similarly to Theorem~\ref{thm:thm1}, it can be shown that the events $\{C(\{1\})<R\}$
and $\{C(\{2\})<R\}$ are disjoint. Since by symmetry we
further have that
\begin{align}
    \prob(C(\{1\})<R|C)=\prob(C(\{2\})<R|C),
\end{align}
it follows that
%. Since we are interested in $0\leq R \leq C$ we conclude that
\begin{align}
    \prob(C_{\rm sym}<R| C) = 2\cdot\frac{2^{R/2}-1}{2^{C}-1}.
    \label{eq:nonzeroprob}
\end{align}
\end{proof}

We note that the probability in \eqref{eq:nonzeroprob}
is strictly smaller than one at $R=C$.
Furthermore,  the probability that the symmetric-rate capacity
coincides with the sum capacity (i.e., that the equal-rate line
passes through the dominant face of the capacity region) is given
by
\begin{align}
    \prob\left({C_{\rm sym}=C|C}\right)&=1-\prob\left({C_{\rm sym}<C|C}\right) \nonumber \\
    & = 1 - 2\cdot\frac{2^{C/2}-1}{2^{C}-1}.
    \label{eq:green}
\end{align}
Note that the latter probability tends to one exponentially
fast in $C$.

% by Theorem~\ref{thm:thm4} the probability that
%  the symmetric-rate equals
% achieving the white-input mutual information $C$ when using symmetric rate transmission scheme is not zero. The while-input mutual information can be viewed as the ``sum-capacity'' since it suggests that the bottleneck in the capacity region  \eqref{eq:capacity_region_MAC} in the case where $\svv{S}=\{1,2,\ldots,\Nt\}$.

Figure~\ref{fig:capRegion2users} depicts the capacity region for several cases where the sum-capacity equal $2$ bits/channel use.

The probability that the capacity region is of the type of the dashed
line (equal-rate line passing through the dominant face of the region)
is given by \eqref{eq:green}
%The equal rate capacity is highlighted as well and we can see that there %are cases where the equal rate capacity coincide with the sum-capacity.
which for $C=2$ yields
\begin{align}
    \prob\left({C_{\rm sym}=C|C=2}\right)
    %&=1-\prob\left({C_{\rm sym}<C|C}\right) \nonumber \\
    & = 1 - 2\cdot\frac{2^1-1}{2^2-1} \nonumber \\
    & = 1 - 2\cdot\frac{1}{3} \nonumber \\
    &=\frac{1}{3}.
    \label{eq:eq}
\end{align}
Figure~\ref{fig:pdf_c2} depicts
the probability density function of the symmetric-rate capacity
of a two-user i.i.d. Rayleigh fading MAC given that the sum-capacity is $C=2$. The probability in \eqref{eq:eq} manifests itself as a delta function at the sum-capacity.

\begin{figure}
\begin{center}
\includegraphics[width=1\columnwidth]{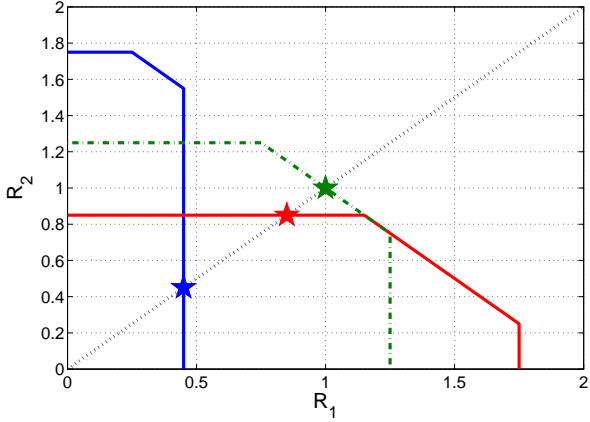}
\end{center}
\caption{Different capacity regions corresponding to a two-user  MAC with sum-capacity $C=2$. For the channel depicted with a dashed-dotted line, the dominant face constitutes the bottleneck and $C_{\rm sym}=C$. }
\label{fig:capRegion2users}
\end{figure}

\begin{figure}
\begin{center}
\includegraphics[width=1\columnwidth]{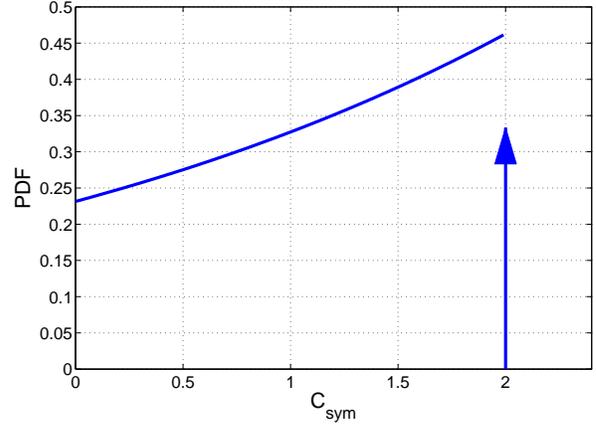}
\end{center}
\caption{Probability density function of the symmetric-rate capacity
of a two-user i.i.d. Rayleigh fading MAC given that the sum-capacity is $C=2$.}
\label{fig:pdf_c2}
\end{figure}

Theorem~\ref{thm:thm4} may be extended to the case of $\Nt>2$ but
rather than obtaining an exact characterization of the distribution of the symmetric-rate capacity, we will derive lower and upper bounds for it.
We begin with the following lemma from which Theorem~\ref{thm:thm5} follows.
\begin{lemma}
For an $N_t$-user i.i.d. Rayleigh fading  MAC with sum capacity $C$,
%the outage probability
for any  subset of users $S\subseteq\{1, 2,\ldots,\Nt\}$ with cardinality $k$, we have
 \begin{align}
     \prob\left(C(\{S\})<R_{}|C_{}\right)
     &= \frac{\mathcal{B}(\frac{2^{R_{}|S|/\Nt}-1}{2^{C_{}}-1};|S|,\Nt-|S|)}{\mathcal{B}(1;|S|,\Nt-|S|)}\nonumber \\
     &\triangleq P_{\rm out}(k,R_{}|C)
 \end{align}
 where $0\leq R_{} \leq C_{}$ and
 \begin{align}
     \mathcal{B}(x;a,b)=\int_0^{x}u^{a-1}(1-u)^{b-1}du
 \end{align}is the incomplete beta function.
 \label{lem:lem1}
\end{lemma}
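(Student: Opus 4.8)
The plan is to condition on the sum capacity $C$, reduce the event $\{C(S)<R\}$ to a statement about the fraction of the total channel energy carried by the users in $S$, and then identify the law of that fraction as a Beta distribution.

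First I would fix $C$, which amounts to fixing the total energy $\sum_{i=1}^{\Nt}|h_i|^2=2^{C}-1$. As in the proof of Theorem~\ref{thm:thm4}, conditioned on this value the channel vector $\svv{h}=(h_1,\ldots,h_{\Nt})$ is uniformly distributed on the complex sphere of radius $\sqrt{2^{C}-1}$, so that $\svv{h}/\|\svv{h}\|$ is a uniformly random unit vector. Since $C(S)=\frac{\Nt}{|S|}\log\bigl(1+\sum_{i\in S}|h_i|^2\bigr)$, the event $\{C(S)<R\}$ is equivalent to $\sum_{i\in S}|h_i|^2<2^{R|S|/\Nt}-1$. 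Normalizing by the fixed total energy and defining
\begin{align}
W\triangleq\frac{\sum_{i\in S}|h_i|^2}{\sum_{i=1}^{\Nt}|h_i|^2},
\nonumber
\end{align}
this event becomes $\{W<x\}$ with $x=\frac{2^{R|S|/\Nt}-1}{2^{C}-1}$.

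The key step is to determine the distribution of $W$. Because the $|h_i|^2$ are i.i.d.\ unit-mean exponential random variables, the numerator $\sum_{i\in S}|h_i|^2$ is $\mathrm{Gamma}(|S|,1)$-distributed, the complementary sum $\sum_{i\notin S}|h_i|^2$ is $\mathrm{Gamma}(\Nt-|S|,1)$-distributed, and the two are independent. By the standard Gamma--Beta relation, $W$ therefore follows a $\mathrm{Beta}(|S|,\Nt-|S|)$ law and, crucially, is \emph{independent} of the total energy, hence of $C$. (Equivalently, $W$ is the squared norm of $|S|$ coordinates of a uniform unit vector, a special case of the Jacobi description recalled in Section~\ref{sec:channel_model}.) Thus the conditioning on $C$ enters only through the threshold $x$, and
\begin{align}
\prob\bigl(C(S)<R\,|\,C\bigr)=\prob(W<x)=\frac{1}{\mathcal{B}(1;|S|,\Nt-|S|)}\int_0^{x}u^{|S|-1}(1-u)^{\Nt-|S|-1}\,du,
\nonumber
\end{align}
which is precisely $\mathcal{B}(x;|S|,\Nt-|S|)/\mathcal{B}(1;|S|,\Nt-|S|)$, as claimed, upon substituting the value of $x$.

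The only real obstacle is establishing the Beta law for $W$ together with its independence of $C$; once the Gamma--Beta representation (equivalently, the uniform-unit-vector/Jacobi description already used in Theorem~\ref{thm:thm4}) is in hand, the rest is direct substitution. Two minor points remain to be checked: that the complete Beta function $\mathcal{B}(1;|S|,\Nt-|S|)$ indeed furnishes the normalizing constant of the $\mathrm{Beta}(|S|,\Nt-|S|)$ density, and that the hypothesis $0\le R\le C$ keeps $x$ within $[0,1]$, so that the incomplete beta function is evaluated on its natural domain.
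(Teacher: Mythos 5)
Your proof is correct, and it reaches the same Beta/incomplete-beta law as the paper, but by a somewhat different route. The paper's proof conditions on $C$ to place $\svv{h}/\|\svv{h}\|$ uniformly on the complex unit sphere, identifies it with the first row of a CUE matrix, and then invokes the machinery of Section~II: a $1\times k$ submatrix of a unitary matrix has rank-one Jacobi-distributed singular values, whose density $\propto \lambda^{k-1}(1-\lambda)^{\Nt-k-1}$ integrates to the stated ratio of incomplete beta functions. You instead work directly with the i.i.d.\ exponential variables $|h_i|^2$ and use the Gamma--Beta relation: the normalized partial sum $W$ is $\mathrm{Beta}(|S|,\Nt-|S|)$ and independent of the total energy. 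The two arguments rest on the same underlying fact (a rank-one Jacobi ensemble, equivalently a marginal partial sum of a Dirichlet vector, is a Beta variable), but yours is more elementary and self-contained, and it buys one thing the paper leaves implicit: the independence of $W$ from $\|\svv{h}\|^2$ makes the conditioning on the measure-zero event $\{C=c\}$ completely unambiguous, whereas the paper relies on the sphere-conditioning step without comment. The paper's route, on the other hand, is uniform with the rest of the manuscript (it is the same Jacobi argument used for Theorem~\ref{thm:thm3}) and generalizes to submatrices of rank greater than one, which your Gamma--Beta shortcut does not. Your two closing checks are the right ones and both go through: the normalizing constant of the $\mathrm{Beta}(a,b)$ density is exactly $\mathcal{B}(1;a,b)$, and $0\le R\le C$ gives $2^{R|S|/\Nt}-1\le 2^{C}-1$, so the threshold lies in $[0,1]$. (Both your argument and the paper's implicitly require $S$ to be a proper subset so that $\Nt-|S|\ge 1$; for $S=\{1,\dots,\Nt\}$ the quantity $C(S)=C$ is deterministic and the formula degenerates, an edge case the lemma statement itself glosses over.)
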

\begin{proof}
Similar to the case of two users,  $\svv{h}\triangleq(h_1,\ldots,h_{\Nt})$ is uniformly distributed over
an $N_t$-dimensional complex sphere of radius $\sqrt{2^ {C}-1}$ and hence $\svv{h}/\| \svv{h} \|$ may be viewed as the first row of a unitary matrix $\svv{U}$ taken from the CUE.

By symmetry, for any set $S$ with cardinally $k$, the distribution of
$ C(\{S\})$ is equal to that of
%(and singe $\svv{U}$ is taken from the CUE, without loss of generality %taking the first $k$ channels) we have
\begin{align}
    C(\{1,2,\ldots,k\})
    &=\frac{\Nt}{k}\log\left(1+\sum_{i=1}^k |h_i|^2\right) \nonumber \\
                  &=\frac{\Nt}{k}\log\left(1+(2^C-1)\sum_{i=1}^k |U_{1,i}|^2\right)
  \label{eq:58}
\end{align}
% Denoting the size of $S$ as $|S|=k$ we note that taking $k$ columns from the equivalent channel is equivalent to applying precoding matrix composed from $k$ columns taken from the original precoding matrix.
% Denote
% \begin{align}
%     \svv{U}(k)=\begin{bmatrix}
%     U_{1,1} & \cdots & U_{1,k} \\
%     U_{2,1} & \cdots & U_{2,k} \\
%     \vdots & \vdots & \vdots \\
%     U_{\Nt,1} & \cdots & U_{\Nt,k}
%     \end{bmatrix}
% \end{align}

% we have
% \begin{align}
%     &C(\{S\}) \nonumber \\
%     &=\frac{k}{\Nt}\log\det\left(\svv{I}_{k\times k}+\svv{U}(k)^T\begin{bmatrix}
%     2^C-1 & \cdots & 0\\
%     0 & \cdots & 0 \\
%     \vdots & \vdots & \vdots \\
%     0 & \cdots & 0
%     \end{bmatrix}
%     \svv{U}(k)\right) \nonumber \\
%     &=\frac{k}{\Nt}\log\left(1+(2^C-1)\svv{U}(k)_{1}^H\svv{U}(k)_{1}\right),
% \end{align}
%where $\svv{U}(k)_{1}$ is the first row of $\svv{U}(k)$. Following \cite{dar2013jacobi} let $\lambda$ be the eigenvalue of $\svv{U}(k)_{1}^H\svv{U}(k)^H_{1}$. We therefore have
Denoting the partial sum of $k$ entries
as $X=\sum_{i=1}^k |U_{1,i}|^2$, we therefore have
\begin{align}
   \prob(C(\{S\})<R|C)&=\prob\left( 1+ \left(2^C-1\right)X <2^{R\frac{\Nt}{k}}\right)
   \nonumber \\
   & = \prob\left(X<\frac{2^{R\frac{\Nt}{k}}-1}{2^C-1}\right).
\end{align}
We note that the vector
$( |\svv{U}_{1,1}|^2,\ldots,|\svv{U}_{1,N_t}|^2 )$ has the Dirichlet distribution and a partial sum of its entries has a Jacobi distribution.
To see this, we note that \eqref{eq:58} can be written as
\begin{align}
    C(\{1,2,\ldots,k\})=\frac{\Nt}{k}\log\left(1+(2^C-1)\svv{U}(k)_{1}\svv{U}(k)^H_{1}\right)
 \end{align}
where $\svv{U}(k)_{1}$ is a vector which holds the first $k$ elements of the first row of $\svv{U}$. Noting that since $\svv{U}(k)_{1}$ is a submatrix of a unitary matrix, its singular values follow the Jacobi distribution. It follows that $X$ has Jacobi distribution with rank 1.

% To see this we note that \eqref{eq:58} can be written as
% \begin{align}
%     C(\{1,2,\ldots,k\})=\frac{\Nt}{k}\log\left(1+(2^C-1)\svv{U}^k_{1}^H \svv{U}^k_{1}\right)
%  \end{align}

We thus obtain
(using e.g., \cite{dar2013jacobi}),
\begin{align}
     \prob\left(C(\{S\})<R_{}|C_{}\right) &= \int_{0}^{\frac{2^{R_{}k/\Nt}-1}{2^{C_{}}-1}}x^{k-1}x^{\Nt-k-1}d\lambda \nonumber \\ &=\frac{\mathcal{B}(\frac{2^{R_{}k/\Nt}-1}{2^{C_{}}-1};k,\Nt-k)}{\mathcal{B}(1;k,\Nt-k)},\nonumber
 \end{align}
where
\begin{align}
     \mathcal{B}(x;a,b)=\int_0^{x}u^{a-1}(1-u)^{b-1}du \nonumber
 \end{align}
 is the incomplete beta function.

% We note that $\svv{U}(k)=\begin{bmatrix} \svv{U}_{1,1} & \cdots & \svv{U}_{1,k}\end{bmatrix}$ is a submatrix of a unitary matrix. Let $\lambda$ be the eigenvalue of $\svv{U}(k)\svv{U}(k)^H$. We therefore have
% \begin{align}
%   \prob(C(\{S\})<R|C)&=\prob(\left(1+(2^C-1)\lambda\right)<2^{R\frac{\Nt}{k}})\nonumber \\
%   & = \prob(\lambda<\frac{2^{R\frac{\Nt}{k}}-1}{2^C-1}).
% \end{align}
% Following \cite{dar2013jacobi} we note that ${\lambda}$ has a (simple) Jacobi distribution. Therfore we get
% \begin{align}
%      \prob\left(C(\{S\})<R_{}|C_{}\right) &= \int_{0}^{\frac{2^{R_{}k/\Nt}-1}{2^{C_{}}-1}}\lambda^{k-1}\lambda^{\Nt-k-1}d\lambda \nonumber \\ &=\frac{\mathcal{B}(\frac{2^{R_{}k/\Nt}-1}{2^{C_{}}-1};k,\Nt-k)}{\mathcal{B}(1;k,\Nt-k)}\nonumber
%  \end{align}
\end{proof}
% A more explicit deviation for this Lemma appear in Appendix~\ref{sec:apcc} (where the usage of Jacobi distribution is explained through matric representation of the MAC).

%An important property of Lemma~\ref{thm:thm5} is that the outage %probability of a specific set $S$,
%As the depends only on the cardinality of the set. Hence,
%$We obtain the following.

\begin{theorem}
For an $N_t$-user Rayleigh MAC, we have
{
\begin{align}
 \max_k P_{\rm out}(k,R_{}|C) & \leq  \prob\left(C_{\rm sym} <R_{}|C\right) \\
 & \leq  \sum_{k=1}^{\Nt} {{\Nt}\choose{k}} P_{\rm out}(k,R_{}|C). \nonumber
 \end{align}
 }
where $P_{\rm out}(k,R_{}|C)$ is defined Lemma~\ref{lem:lem1}.
%\begin{align}
% P_{\rm out}(k,R_{}|C)\triangleq \prob(C(\{1,2,\ldots,k\})<R_{}|C_{}).
%\end{align}
\label{thm:thm5}
\end{theorem}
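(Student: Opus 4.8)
The plan is to express the symmetric-rate outage event as a union over subsets and then bound the probability of this union from both sides, feeding in Lemma~\ref{lem:lem1} to evaluate the individual terms. Recalling from \eqref{eq:sym_capacity_MAC} and the notation of \eqref{eq:joint,st} that $C_{\rm sym}=\min_{S\subseteq\{1,\ldots,\Nt\}}C(S)$, the outage event decomposes as
\begin{align}
\{C_{\rm sym}<R\}=\bigcup_{S\subseteq\{1,\ldots,\Nt\}}\{C(S)<R\}.
\end{align}
This single structural observation underlies both inequalities.

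For the upper bound I would apply the union bound to this decomposition, yielding $\prob(C_{\rm sym}<R|C)\leq\sum_{S}\prob(C(S)<R|C)$. By Lemma~\ref{lem:lem1}, the conditional probability $\prob(C(S)<R|C)=P_{\rm out}(k,R|C)$ depends on $S$ only through its cardinality $k=|S|$. Grouping the $\binom{\Nt}{k}$ subsets of each fixed size $k$ then collapses the sum over subsets into $\sum_{k=1}^{\Nt}\binom{\Nt}{k}P_{\rm out}(k,R|C)$, which is the claimed upper bound.

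For the lower bound I would instead use that the probability of a union is at least the probability of any single event it contains. Since for each cardinality $k\in\{1,\ldots,\Nt\}$ there is at least one subset $S$ of that size, I may pick the subset whose size maximizes $P_{\rm out}(k,R|C)$; by Lemma~\ref{lem:lem1} this shows $\prob(C_{\rm sym}<R|C)\geq\max_k P_{\rm out}(k,R|C)$.

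There is no serious obstacle: the result is a routine two-sided estimate for the probability of a union, with the per-subset terms supplied by the preceding lemma. The only point worth flagging is that the upper bound is deliberately loose. Unlike the two-user case of Theorem~\ref{thm:thm4}, where the events $\{C(\{1\})<R\}$ and $\{C(\{2\})<R\}$ were shown to be disjoint so that the union bound is tight, for general $\Nt$ the sets $\{C(S)<R\}$ overlap; no attempt is made at an exact inclusion--exclusion evaluation, and the gap between the two bounds is precisely the overlap that the union bound discards.
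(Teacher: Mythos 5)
Your proposal is correct and follows essentially the same route as the paper: both bounds come from viewing the event $\{C_{\rm sym}<R\}$ as the union of the per-subset events $\{C(S)<R\}$, lower-bounding the union's probability by the largest single term and upper-bounding it via the union bound, with Lemma~\ref{lem:lem1} supplying the fact that each term depends on $S$ only through $|S|$ (which yields the $\binom{N_t}{k}$ grouping). Your closing remark about the looseness of the union bound relative to the disjointness argument of Theorem~\ref{thm:thm4} is a fair observation but not part of the proof itself.
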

\begin{proof}
To establish the left hand side of the theorem, first note
that $C_{\rm sym}\leq C(\{S\})$ for any subset $S$ and hence
%Thus,
%\begin{align}
%    \prob\left(C(\{S\})<R_{}|C_{}\right)=\prob(C(\{1,2,\ldots,k\})<R_{}|C%_{})
%\end{align}
%
%As noted in the proof of Lemma~\ref{lem:lem1},
\begin{align}
C_{\rm sym} \leq \min_k C(\{1,2,\ldots,k\}).
\end{align}
It follows that
\begin{align}
 \prob\left(C_{\rm sym} <R_{}|C\right) &\geq \prob\left( \min_k C(\{1,2,\ldots,k\})<R_{} \Big|  C\right)   \nonumber \\
 &= \prob\left( \bigcup_k \left\{ C(\{1,2,\ldots,k\})<R_{} \right\}\Big|  C\right) \nonumber \\
 &\geq \max_k \prob\left(  C(\{1,2,\ldots,k\})<R_{}  \Big|  C\right) \nonumber \\
 & = \max_k P_{\rm out}(k,R_{}|C).
\end{align}
%Therefore, the left hand side of the theorem
%can be easily p by noting that $C_{\rm sym}$ is acquired by taking the %minimum over all values of $k$. Hence,
%follows by noticing  that
%$C_{\rm sym}\leq C(\{S\})$ for any subset $S$ and hence
%it follows that
%\begin{align}
%     P_{\rm out}(k,R_{}|C)~\leq~\prob\left(C_{\rm sym}<R_{}|C\right).
%\end{align}
%Taking the maximum over $k$ yields the left hand side.
The right hand side is proved by applying the union bound.
%and recalling that subsets with the same cardinality have the same outage probability. %Thus we can upper bound the outage probability by multiplying the outage %probability of a specific cardinality with the number of sets which has %the same cardinality. Hence we get
%\begin{align}
%\prob\left(C_{\rm sym}<R_{}|C\right) & \leq \sum_{S\subseteq\{1, %2,\ldots,\Nt\}}\prob\left(C\{S\}<R_{}|C\right) \nonumber \\
%& = \sum_{k=1}^{\Nt} {{\Nt}\choose{k}} P_{\rm out}(k,R_{}|C)
%\end{align}
\end{proof}

Figure~\ref{fig:lowUpBoundOutageMAC} depicts these bounds for a $4$-user i.i.d. Rayleigh fading  MAC with sum capacity $C_{}=8$.

\begin{figure}
\begin{center}
\includegraphics[width=1\columnwidth]{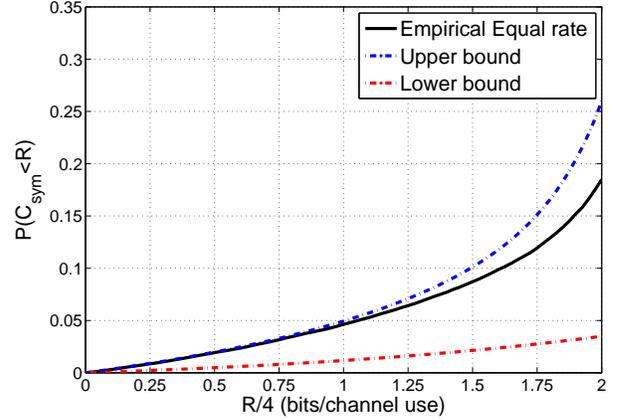}
\end{center}
\caption{Comparison of empirical evaluation of \eqref{eq:sym_capacity_MAC} and Theorem 5 (upper  and lower bounds for the outage probability) for a $1 \times 4$ i.i.d. Rayleigh fading MAC with sum capacity $C_{}=8$.}
\label{fig:lowUpBoundOutageMAC}
\end{figure}

\subsection{Performance of Integer-Forcing over the Multiple-Access Channel}
%As mentioned in Section~\ref{sec:lowerbound}, $C_{\rm sym}$ serves as an %upper bound on the achievable rate of an IF receiver.
It has been further shown in \cite{IntegerForcing} that the IF receiver achieves the diversity-multiplexing tradeoff (DMT) over i.i.d. Rayleigh fading channels where the number of receive antennas is greater or equal to the number of transmit antennas.

We observe that this does not hold in the general case; in particular, for the case of a single receive antenna.
Specifically, Figure~\ref{fig:logErr_ML_IF_T2} depicts (in logarithmic scale) the empirical outage probability of the IF  receiver and the exact outage probability of Gaussian codebooks and an ML receiver (as given by Theorem~\ref{thm:thm4}) for a two-user i.i.d. Rayleigh fading MAC. It is evident that the slopes are different. This raises the question of whether IF is inherently ill-suited for the MAC channel. A negative answer to this question may be inferred by recalling some lessons from the MAC DMT.

While the optimal DMT for the i.i.d. Rayleigh fading MAC was
derived in \cite{2004:DMT_MAC} by considering Gaussian codebooks of
sufficient length, it was subsequently shown that the MAC DMT can be achieved by structured codebooks by combining uncoded QAM constellations with
% analyzed for the case in which the block length $l$ is large enough, i.e., in our case $l\geq\Nt$. This suggests that IF performance can be improved when combined with space-time precoding.  It is worth noting that each transmitter is applying the precoding independently hence the distributed nature of the problem is not violated.
space-time unitary precoding (and ML decoding). Specifically, such a  MAC-DMT achieving  construction is given in  \cite{2011:Hollanti_DMT_optimal_codes}.
This suggests that the sub-optimality of the IF receiver observed in
 Figure~\ref{fig:logErr_ML_IF_T2} may at least in part be remedied by applying unitary space-time precoding at each of the transmitters.
 We note  that each transmitter applies precoding only to its own data streams so the distributed nature of the problem is not violated.

Following this approach, we have implemented the IF receiver with  unitary space-time precoding applied at each transmitter. We have employed random CUE precoding over two ($T=2$) time instances as well as deterministic precoding using the
matrices proposed in
%As a reference we present the outage achieved when using the code
\cite{badr2008distributed}.\footnote{When using an ML receiver, this space-time code is known to achieve the DMT for multiplexing rates $r\leq\frac{1}{5}$. As detailed in \cite{2011:Hollanti_DMT_optimal_codes}, whether this code achieves the optimal MAC-DMT
also when $r> \frac{1}{5}$ remains an open question.}
%but it suffices for our needs.
%The resultprecoding matrices of \cite{badr2008distributed}
These matrices can be expressed as
\begin{align}
    \svv{P}_{st,c}^1&=\frac{1}{\sqrt{5}}
    \begin{bmatrix}
    \alpha & \alpha\phi \\
    \bar{\alpha} & \bar{\alpha}\bar{\phi}
    \end{bmatrix} \nonumber \\
     \svv{P}_{st,c}^2&=\frac{1}{\sqrt{5}}
    \begin{bmatrix}
    j\alpha & j\alpha\phi \\
    \bar{\alpha} & \bar{\alpha}\bar{\phi}
    \end{bmatrix}
    \label{eq:Badr_Bel}
\end{align}
where
\begin{align}
    \phi&=\frac{1+\sqrt{5}}{2} \nonumber \\
    \bar{\phi}&=\frac{1-\sqrt{5}}{2} \nonumber \\
    \alpha&=1+j-j\phi \nonumber \\
    \bar{\alpha}&=1+j-j\bar{\phi}.
\end{align}
%The results appear
%We also compare random space-time precoding over two channel uses with IF.
As can be seen, both random space-time precoding and the precoding matrices in \eqref{eq:Badr_Bel} improve the outage probability for most target rates.

\begin{figure}
\begin{center}
\includegraphics[width=1\columnwidth]{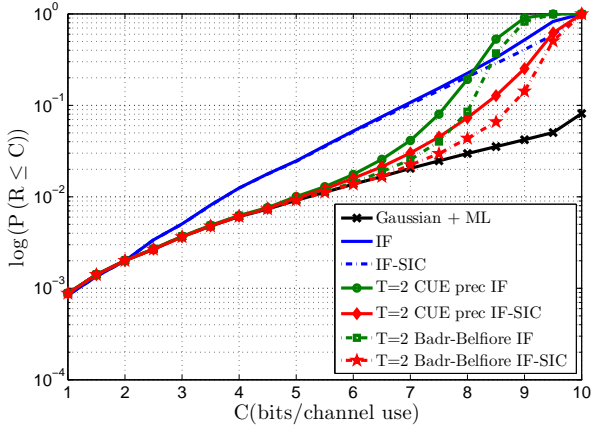}
\end{center}
\caption{Outage probability of  linear codes (with and without space-time precoding) with IF equalization versus  Gaussian codebooks with ML decoding for a $1 \times 2$ i.i.d. Rayleigh fading  MAC with sum capacity $C_{}=10$.}
\label{fig:logErr_ML_IF_T2}
\end{figure}

Figure~\ref{fig:Rout_Csum_ML_IF_T2} depicts the fraction of the ergodic capacity achieved when all users transmit at the symmetric-rate capacity (per user) for the two-user i.i.d. Rayleigh fading MAC versus the fraction achieved when using linear codes (at the maximal achievable rate) in conjunction with IF-SIC  with different precoding methods as described above.
%The symmetric capacity is compared against random space-only precoding, %random space-time precoding (which utilizes two Channel uses) and the Badr-Belfiore precoding described in \eqref{eq:Badr_Bel}.
We observe that IF-SIC combined with space-time precoded linear codes
achieves a large fraction of the ergodic symmetric-rate capacity. Furthermore, the fraction of ergodic capacity achieved approaches one as the sum-capacity grows.
%ergodic capacity which is close to the symmetric capacity. We also note %that using random space-time precoding attains most of the benefit gained %by the Badr-Belfire precoding.

\begin{figure}
\begin{center}
\includegraphics[width=1\columnwidth]{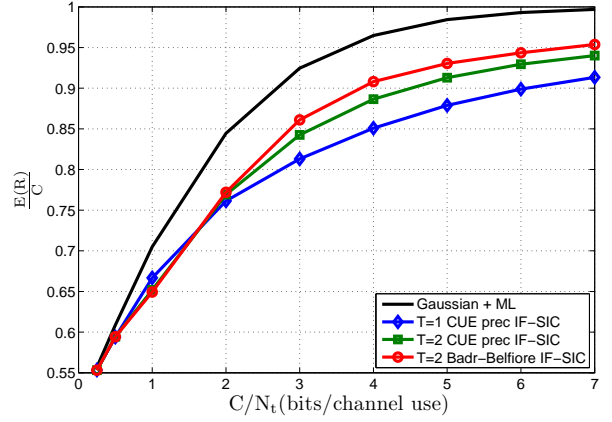}
\end{center}
\caption{Ergodic capacity conditioned on sum-capacity of linear codes (with and without space-time precoding) with IF equalization versus Gaussian codebooks with ML decoding over a  $1 \times 2$  i.i.d. Rayleigh fading MAC.}
\label{fig:Rout_Csum_ML_IF_T2}
\end{figure}

\section{Discussion and Outlook}
For the $N_r\times 2$ compound MIMO channel, using CUE precoding over a time-extend channel offers significant benefit over space-only precoding. However, space-time CUE precoding falls short when compared to algebraic space-time precoding.
Specifically, the combination of Alamouti precoding at low rates and golden code precoding (with IF-SIC) at high rates is superior to CUE precoding.

Nonetheless, for the $N_r\times 4$ compound MIMO channel, we observe from the empirical results (Figure~\ref{fig:eff_1per_4x4}) that there is a region where using random space-time CUE precoding results in the highest guaranteed efficiency.
This provides motivation for searching for fixed precoding matrices that
yield better results than perfect codes at the price of a small outage probability.

As a concluding remark, we note that the derived lower bound  holds only for the case of a maximum of two distinct singular values in the SVD decomposition of $\svv{H}_c$.
%This limitation occurs since we only know the %eigenvalue distribution of a single submatrix of %unitary matrix drawn from the CUE. In the case of %two singular values,  we can take the complement %as the eigenvalue distribution of the residual %matrix (see (\ref{eq:24}).
Nevertheless, the treatment holds also for the important case of a open-loop MAC channel with a single
receive antenna, where the transmitters (also equipped with a single antenna)
know only the sum capacity of the channel, where the results are modified as described in footnote~\ref{footnote5}.

%This bound may be extended to any number of transmit antennas by ``grouping'' %singular values, i.e., forming two groups of singular values and using the %largest singular value in each group. A lower bound is then obtained by %considering all options for such grouping and taking the worst case.
%%This results with a (much more) loose bound which holds for any number of %%transmit antennas.
% \bibliographystyle{IEEEtran}
% \bibliography{eladd}

\begin{appendices}
\section{Proof of Theorem~\ref{thm:thm3}}
\label{sec:appa}
\begin{proof}
Using $T$ time extensions we arrive at a $N_r T\times 2T$ equivalent channel. The ML lower bound (\ref{eq:R_ML_spaceOnly}) now takes the form
\begin{align}
%  R_{\rm ML,ST}&=\frac{1}{T}\min_{S\subseteq\{1, 2,\ldots,\Nt T\}}\frac{\Nt T}{|S|}\log\det\left(\svv{I}_S+\svv{H}_S\svv{H}_S^H\right)\\
R_{\rm ML,ST} &=\frac{1}{T}\min_k\frac{2T}{k}\min_{S\in S_k}\log\det\left(\svv{I}_k+\svv{H}_{s}\svv{H}_{s}^H\right) \nonumber \\
&=\min_k \min_{S\in S_k}\frac{2}{k} C(S\in S_k),
\end{align}
where $S_k$ is the set which has all the subsets of cardinality $k$ contained in $\{1,2,\cdots,2 T\}$. Hence $\svv{H}_{s}$ is a submatrix of $\mathcal{H}_c\svv{P}_{st,c}$ formed by taking $k$ columns.

Using (\ref{eq:SVDofHc}), and after possibly applying column permutations, the effective channel takes the form
\begin{align}
\svv{E}\triangleq
\widetilde{\svv{U}_c}
\begin{bmatrix}
\sqrt{\rho_1}\svv{I}_{T\times T}  & \svv{0} \\ \svv{0} & \sqrt{ \rho_2} \svv{I}_{T\times T}
\end{bmatrix}
\svv{P}_{st,c},
\label{eq:PstcSVD}
\end{align}
where $\widetilde{\svv{A}}=\svv{I}_{T\times T} \otimes \svv{A}$. Since $\svv{P}_{st,c}$ is drawn from the CUE, it follows that $\svv{E}$ is equal to $\mathcal{H}_c\svv{P}_{st,c}$ in distribution and thus we assume that the
channel is the former for sake of analysis.
In particular, we have
\begin{align}
C(S\in S_k)&=\log\det\left(\svv{I}_{N_r T}+\svv{E}_{s}\svv{E}_{s}^H\right) \nonumber \\
&=\log\det\left(\svv{I}_k+\svv{E}_{s}^H\svv{E}_{s}\right).
\end{align}
Let us use the notation $\left[~\right]_{s}$ to denote the matrix
resulting from a specific selection of $k$ columns from a matrix, corresponding to the chosen set $s$. Denoting
\begin{align}
\svv{E}_s =   \left[\svv{E}\right]_{s}  \nonumber
=\widetilde{\svv{U}_c}
\begin{bmatrix}
\sqrt{\rho_1}\svv{I}_{T\times T}  & \svv{0} \\ \svv{0} & \sqrt{ \rho_2} \svv{I}_{T\times T}
\end{bmatrix}
\left[\svv{P}_{st,c}\right]_s,
\end{align}
we have
\begin{align}
&C(S\in S_k)=\log\det\left(\svv{I}_{k}+\svv{E}_{s}\svv{E}_{s}^H\right)\nonumber \\
&=\log\det\left(\svv{I}_k+\left[\svv{P}_{st,c}\right]_s^H\begin{bmatrix}
{\rho_1}\svv{I}_{T\times T}  & \svv{0} \\ \svv{0} & {\rho_2} \svv{I}_{T\times T}
\end{bmatrix}\left[\svv{P}_{st,c}\right]_s\right) \nonumber \\ &=\log\det\left(\svv{I}_k+\rho_1\svv{P}_1^H\svv{P}_1+\rho_2\svv{P}_2^H\svv{P}_2\right),
\end{align}
where $\left[\svv{P}_{st,c}\right] = \begin{bmatrix} \svv{P}_1 \\ \hline \svv{P}_2
\end{bmatrix}$.

As described in \cite{dar2013jacobi}, we note that the singular values of  $\svv{P}_1$ (which is a rectangular submatrix of dimensions $\frac{T}{2}\times k$ of the $2 T\times 2 T$ unitary matrix $\svv{P}_{ST,c}$) has the following Jacobi distribution
\begin{itemize}
\item
    When $1\leq k\leq T$, the singular values of $\svv{P}_1$ have the same distribution as the eigenvalues of
        the Jacobi ensemble $\mathcal{J}(T,T,k)$.
\item
    When $T < k \leq 2T$, using Lemma~1 in \cite{dar2013jacobi}, we have that the singular values of $\svv{P}_1$ have the same distribution as the eigenvalues of the Jacobi ensemble $\mathcal{J}(T,T,2T-k)$.
\end{itemize}

Further, we recall a derivation appearing in Lemma 1 of \cite{dar2013jacobi} (which is a corollary of \cite{paige1981towards}) and note that since $\svv{P}_{st,c}$ is unitary, we have
\begin{align}
\svv{P}_1^H\svv{P}_1+\svv{P}_2^H\svv{P}_2=\svv{I}_k.
\end{align}
Therefore
\begin{align}
\svv{P}_1^H\svv{P}_1&=\svv{I}_k-\svv{P}_2^H\svv{P}_2 \nonumber \\
\svv{U}\svv{D}_1\svv{V}^H&=\svv{I}_k-\svv{P}_2^H\svv{P}_2 \nonumber \\
\svv{D}_1&=\svv{U}^H\left(\svv{I}_k-\svv{P}_2^H\svv{P}_2\right)\svv{V} \nonumber \\
\svv{D}_1&=\svv{D}_2.
\end{align}

Let $\{\lambda_i^{(1)}\}_{i=1}^{k}$ and $\{\lambda_i^{(2)}\}_{i=1}^{k}$ be the eigenvalues of $\svv{P}_1^H\svv{P}_1$ and $\svv{P}_2^H\svv{P}_2$, respectively. It follows that
\begin{align}
    \lambda_i^{(2)}=1-\lambda_i^{(1)},
\end{align}
and hence
\begin{align}
&C(S\in S_k)=\det\left(\svv{I}_k+\rho_1\svv{P}_1^H\svv{P}_1+\rho_2\svv{P}_2^H\svv{P}_2\right) \nonumber \\
&=\prod_{i=1}^{m}\left(1+\rho_1\lambda_i^{(1)}+\rho_2(1-\lambda_i^{(1)})\right).
\label{eq:valOfDet}
\end{align}

Therefore, for a specific choice of columns $s \in S_k$,  the outage probability may be written as
\begin{align}
& P_{\rm out}\left(\frac{2}{k}C(S\in S_k)\right) \nonumber \\
& = \prob\left(\frac{2}{k}C(S\in S_k)<R\right)\nonumber \\
& =\prob\left(\log\det\left(\svv{I}_k+\rho_1\svv{P}_1^H\svv{P}_1+\rho_2\svv{P}_2^H\svv{P}_2\right)<R\frac{k}{2}\right) \nonumber \\
& = \prob\left(\prod_{i=1}^{k}\left(1+\rho_1\lambda_i^{(1)}+\rho_2(1-\lambda_i^{(1)})\right)<2^{R\frac{k}{2}}\right) \nonumber \\
& = \prob\left(\prod_{i=1}^{k}\left(1+\rho_2+\lambda_i^{(1)}(\rho_1-\rho_2)\right)<2^{R\frac{k}{2}}\right).
\end{align}

Without loss of generality, we assume that $\rho_2\leq\rho_1$ and hence $0\leq \rho_2\leq 2^{C_{}}/2$. Using the explicit expression for the Jacobi distribution (\ref{eq:jacobi}) of these singular values, we have
\begin{itemize}
\item
For $1 \leq k\leq T$:
\begin{align}
    &P_{\rm out}(S\in S_k)=P_{\rm out}(k,T,R,\rho_1,\rho_2) \nonumber \\
    &=\svv{K}_1\int\displaylimits_{\mathcal{B}_1}\prod_{i=1}^{k}\lambda_i^{T-k}(1-\lambda_i)^{T-k}\prod_{i<j}(\lambda_j-\lambda_i)^2d\underline{\lambda}.
\end{align}
\item
For $T \leq k\leq 2T$, by Theorem~3 of \cite{dar2013jacobi}, we have
\begin{align}
P_{\rm out}(k,T,R,\rho_1,\rho_2)=P_{\rm out}(2T-k,2T,\tilde{R},\rho_1,\rho_2),
\end{align}
and thus
\begin{align}
    &P_{\rm out}(S\in S_k)=P_{\rm out}(2T-k,2T,\tilde{R},\rho_1,\rho_2) \nonumber \\
    &=\svv{K}_2\int\displaylimits_{\mathcal{B}_2}\prod_{i=1}^{2T-k}\lambda_i^{k-T}(1-\lambda_i)^{k-T}\prod_{i<j}(\lambda_j-\lambda_i)^2d\underline{\lambda}.
    \end{align}
\end{itemize}

Defining $S_{1,k}$ to be the first element in $S_k$, we have
\small
\begin{align}
&\prob\left(\min_k \min_{S\in S_k} \frac{2}{k} C(S\in S_k)<R\right) \nonumber \\
&\geq\prob\left(\frac{2}{k} \min_k C(S_{1,k})<R\right) \nonumber \\
&\geq \max_k \prob\left( \frac{2}{k} C(S_{1,k})<R\right) \nonumber \\
& = \max_k P_{\rm out}(S_{1,k}).
\end{align}
\normalsize
To conclude, we have established that
\begin{align}
P^{\rm WC}_{\rm out}(C_{},R) & \geq \max_{0\leq \rho_2\leq 2^{C_{}}/2} \prob\left(\min_k \min_{S\in S_k} C(S\in S_k)<R\right) \nonumber \\
& \geq \max_{0\leq \rho_2\leq 2^{C_{}}/2}\max_k P_{\rm out}(S_{1,k}).
\end{align}

% Next we apply the union bound to bound the outage probability for the entire set $S_k$
% \begin{align}
% & \prob\left(\frac{1}{T}\frac{\Nt T}{k}\min_{S\in S_k}\log\det\left(\svv{I}_k+\svv{H}_{s}\svv{H}_{s}^H\right)<R\right)\leq \nonumber \\
% & \prob\left(\bigcup_{S\in S_k}\frac{1}{T}\frac{\Nt T}{k}\log\det\left(\svv{I}_k+\svv{H}_{s}\svv{H}_{s}^H\right)<R\right)= \nonumber \\
% & {k \choose m} P_{\rm out}(k,m,R,\rho_1,\rho_2)
% \end{align}
% Finally, we apply the union bound again to bound the outage probability over all possible values of $k$
% \begin{align}
% &\prob\left(\min_k\frac{1}{T}\frac{\Nt T}{k}\min_{S\in S_k}\log\det\left(\svv{I}_k+\svv{H}_{s}\svv{H}_{s}^H\right)<R\right)\leq \nonumber \\
% & \prob\left(\bigcup_{k=1}^m\frac{1}{T}\frac{\Nt T}{k}\min_{S\in S_k}\log\det\left(\svv{I}_k+\svv{H}_{s}\svv{H}_{s}^H\right)<R\right)\leq \nonumber \\
% & \sum_{k=1}^m\prob\left(\frac{1}{T}\frac{\Nt T}{k}\min_{S\in S_k}\log\det\left(\svv{I}_k+\svv{H}_{s}\svv{H}_{s}^H\right)<R\right) = \nonumber \\
% & \sum_{k=1}^m {k \choose m} P_{\rm out}(k,m,R,\rho_1,\rho_2).
% \end{align}
\end{proof}

\section{For a $N_r \times 2$ MIMO channel the events $\{C(\{1\})<R\}$ and $\{C(\{2\})<R\}$ are disjoint}
\label{sec:appb}
To prove that $\{C(\{1\})<R\}$ and $\{C(\{2\})<R\}$ are disjoint,  we will show that
\begin{align}
C(\{1\})<R \implies C(\{2\})>R,
\end{align}

We start by recalling the explicit expressions for $C(\{1\}$ and $C(\{2\}$
\begin{align}
    C(\{1\})&=2\log\left(1+\begin{bmatrix}\svv{P}_{1,1}\\\svv{P}_{2,1}\end{bmatrix}^H\begin{bmatrix}
{\rho_1} & \svv{0} \\ \svv{0} & {\rho_2}
\end{bmatrix}\begin{bmatrix}\svv{P}_{1,1}\\\svv{P}_{2,1}\end{bmatrix}\right) \nonumber \\
&=2\log\left(1+\rho_1\svv{P}_{1,1}^H\svv{P}_{1,1}+\rho_2\svv{P}_{2,1}^H\svv{P}_{2,1}\right)  \nonumber \\
&=2\log\left(1+\rho_1|\svv{P}_{1,1}|^2+\rho_2|\svv{P}_{2,1}|^2\right).
\end{align}
and
\begin{align}
    C(\{2\})&=2\log\left(1+\begin{bmatrix}\svv{P}_{1,2}\\\svv{P}_{2,2}\end{bmatrix}^H\begin{bmatrix}
{\rho_1} & \svv{0} \\ \svv{0} & {\rho_2}
\end{bmatrix}\begin{bmatrix}\svv{P}_{1,2}\\\svv{P}_{2,2}\end{bmatrix}\right) \nonumber \\
&=2\log\left(1+\rho_1\svv{P}_{1,2}^H\svv{P}_{1,2}+\rho_2\svv{P}_{2,2}^H\svv{P}_{2,2}\right) \nonumber  \\
&=2\log\left(1+\rho_1|\svv{P}_{1,2}|^2+\rho_2|\svv{P}_{2,2}|^2\right).
\end{align}
Since the precoding matrix $\svv{P}$ is a  $2 \times 2$ unitary matrix, we have
\begin{align}
|\svv{P}_{2,1}|^2=1-|\svv{P}_{1,1}|^2  \nonumber \\
|\svv{P}_{2,2}|^2=1-|\svv{P}_{1,2}|^2 \nonumber \\
|\svv{P}_{1,2}|^2=1-|\svv{P}_{1,1}|^2.
\end{align}
Hence,
\begin{align}
C(\{1\})&=2(\log\left(1+\rho_2+|\svv{P}_{1,1}|^2(\rho_1-\rho_2)\right))
\label{eq:rj1}
\end{align}
and
\begin{align}
C(\{2\})&=2(\log\left(1+\rho_2+(1-|\svv{P}_{1,1}|^2)(\rho_1-\rho_2)\right)).
\label{eq:72}
\end{align}

In order to show that  $\{C(\{1\})<R\}$ and $\{C(\{2\})<R\}$  are disjoint, we next show that $C({1})<R$ necessarily implies that $C({2})>R$, for all $0\leq R \leq C$.

To that end, assume that indeed $C(\{1\})<R$. By (\ref{eq:rj1}), this implies that
\begin{align}
1+\rho_2+|\svv{P}_{1,1}|^2(\rho_1-\rho_2))<2^{R/2},
\end{align}
or equivalently
\begin{align}
-1-\rho_2-|\svv{P}_{1,1}|^2(\rho_1-\rho_2))>-2^{R/2}.
\end{align}
It follows that
\begin{align*}
1+\rho_2+(1-|\svv{P}_{1,1}|^2)(\rho_1-\rho_2)&>2+\rho_1+\rho_2-2^{R/2}.
\end{align*}
By (\ref{eq:72}), we have established that
\begin{align*}
C(\{1\})<R \implies C(\{2\})>2\log\left(2+\rho_1+\rho_2-2^{R/2}\right).
\end{align*}
To show that $C(\{1\})<R \implies C(\{2\})>R$, it suffices therefore to show that
\begin{align}
2\log\left(2+\rho_1+\rho_2-2^{R/2})\right)\geq R,
\end{align}
or equivalently,
\begin{align}
2+\rho_1+\rho_2-2^{R/2}\geq 2^{R/2}.
\end{align}
Using (\ref{eq:C_rho1_rho2}), the latter is further equivalent to showing that
\begin{align*}
1+\frac{2^C}{1+\rho_2}+\rho_2-2^{R/2} &\geq 2^{R/2}
\end{align*}
or
\begin{align*}
(1+\rho_2)^2+{2^C} &\geq (1+\rho_2)\cdot2\cdot2^{R/2}.
\end{align*}
Finally, denoting $x=1+\rho_2$, this is equivalent to showing that the following holds.
\begin{align}
x^2-2\cdot2^{R/2}\cdot x+2^C\geq 0.
\end{align}
It can be easily verified that for all values of $\rho_2$, and for all \mbox{$0\leq R \leq C$}, this inequality indeed holds. Therefore, we conclude that $\{C(\{1\})<R\}$ and $\{C(\{2\})<R\}$ are disjoint.
\end{appendices}
\bibliographystyle{IEEEtran}
\bibliography{eladd}

\end{document}